\documentclass[journal]{IEEEtran}
\usepackage{lineno}
\usepackage{cite}
\usepackage{hyperref}
\usepackage[tbtags]{amsmath}
\usepackage{amssymb,amsfonts}
\usepackage{amsthm}
\usepackage{amsbsy}

\usepackage{graphicx}
\usepackage{textcomp}
\usepackage{xcolor}
\usepackage{float}
\usepackage{mathrsfs}
\usepackage{mathtools}
\usepackage{multirow}
\usepackage{algpseudocode}
\usepackage{comment}
\usepackage[inline, shortlabels]{enumitem}
\makeatletter
\newif\if@restonecol

\usepackage{algorithm}

\usepackage{caption}
\usepackage{subcaption}
\usepackage{ragged2e}

\usepackage{dsfont}
\usepackage{bbm}
\usepackage{stfloats}
\usepackage{bm}
\usepackage{orcidlink}

\newtheorem{corollary}{Corollary}

\theoremstyle{definition}
\newtheorem{theorem}{Theorem}

\newtheorem{lemma}{Lemma}

\newcommand{\biggg}{\bBigg@{3}}
\newcommand{\Biggg}{\bBigg@{3.5}}
\makeatother
\begin{document}

\title{ Sensing-Aided Secure Multicast in Rotatable Antenna-Enabled ISAC Systems}

\author{Zequan~Wang, Liang~Yin, Chongjun~Ouyang, Hao~Xu, Yunan~Sun, Yitong~Liu, and Hongwen~Yang

\thanks{Zequan~Wang, Liang~Yin, Hao~Xu, Yunan~Sun, Yitong~Liu, and Hongwen~Yang are with the School of Information and Communication Engineering, Beijing University of Posts and Telecommunications, Beijing, 100876, China (e-mail: \{zequanwang, YinL, Xu\_Hao, sunyunan, liuyitong, yanghong\}@bupt.edu.cn).

Chongjun~Ouyang is with the School of Electronic Engineering and Computer Science, Queen Mary University of London, E1 4NS London, U.K. (e-mail: c.ouyang@qmul.ac.uk).}

\thanks{(Corresponding author: Liang~Yin)}
}

\maketitle

\begin{abstract}
Acquiring the channel state information (CSI) of passive
eavesdroppers remains a fundamental challenge in physical layer
security. The sensing capability of integrated sensing and
communication (ISAC) systems enables estimation of a potential
eavesdropper's angle of departure (AoD) before secure transmission.
Accordingly, a sensing-aided secure multicast scheme is
proposed using a rotatable antenna (RA) architecture that combines
array-level and element-level rotations with analog beamforming. 
The scheme comprises eavesdropper sensing and secure communication stages.
In the sensing stage, the maximum likelihood estimator
(MLE) and corresponding Cram\'er--Rao bound (CRB) are derived for
eavesdropper AoD estimation. The two rotation levels are then
optimized through cyclic coordinate search to minimize the worst-case
CRB. The resulting AoD estimate and CRB determine the center and width of the angular
uncertainty region, respectively.
In the communication stage, the constant-modulus analog beamformer and RA
configuration are jointly optimized to maximize the worst-case secrecy
rate over this region. After angular discretization and smooth
approximation, the resulting problem is solved using a product-space
joint optimization framework. 
Numerical simulation results validate the
convergence and effectiveness of the proposed algorithms. It is
demonstrated that i) the proposed RA-enabled sensing design effectively improves
the eavesdropper AoD estimation accuracy; ii) a high and nearly constant secrecy rate is
maintained over the uncertainty region;
and iii) the joint optimization of the two rotation levels
yields lower CRBs and higher secrecy rates than schemes employing
either a fixed-position array or a single rotation level.
\end{abstract}

\begin{IEEEkeywords}
Eavesdropper sensing, integrated sensing and communication (ISAC), rotatable antenna (RA), robust secrecy beamforming.
\end{IEEEkeywords}

\section{Introduction}
Wireless communication networks are inherently vulnerable to
information leakage because of the broadcast nature of wireless
transmission \cite{A2010}. By exploiting the randomness of wireless
channels, physical layer security (PLS) has emerged as an effective
complement to conventional cryptographic methods
\cite{V2021,L2017}. 
Various techniques have been developed to enhance
PLS. Among them, secure beamforming has been extensively investigated as a means of exploiting spatial degrees of freedom (DoFs)
to enhance legitimate transmission while suppressing information leakage to eavesdroppers \cite{Chen2016, Wu2018}.
However, many existing secure beamforming designs require the channel state information (CSI) of the eavesdropper, which is difficult to acquire because passive eavesdroppers do not normally transmit pilot signals.

The sensing capability of integrated sensing and communication (ISAC)
provides a means of alleviating this limitation
\cite{Matsumine2025,Su2023}. By sensing environmental targets, the base station (BS)
can infer the direction, distance, or location of a potential
eavesdropper, thereby obtaining useful side information for secure
transmission design. This has motivated sensing-aided secure
communication (SASC). 
Existing studies have investigated two-stage
sensing and secure transmission designs \cite{Cao2025} and sensing-assisted secure
beamforming \cite{Liu2025,Zhang20261,Zhang2026}. More recently, reconfigurable antennas have been introduced into SASC to provide additional spatial flexibility.
Specifically, the transmit and receive movable antenna (MA) positions were optimized in \cite{Chen2026} 
to improve the Cram\'er--Rao bound (CRB) of eavesdropper angle-of-departure (AoD) estimation, with the resulting angular uncertainty further incorporated into robust beamforming. 
These studies highlight that sensing-derived eavesdropper information is inherently imperfect and that the associated estimation uncertainty should be explicitly accounted for in subsequent secure transmission design.

Extending SASC to secure multicast transmission introduces a
different spatial design problem. A common confidential message must
be reliably delivered to multiple legitimate users distributed over
different directions, while information leakage must be suppressed
over the possible eavesdropper AoDs. Consequently, the secrecy rate
is jointly limited by the weakest legitimate user and the worst-case
eavesdropper direction. This problem becomes more restrictive under
analog beamforming because the constant-modulus constraint imposed by phase shifters limits the
available beamforming flexibility. Although analog beamforming provides a
hardware-efficient implementation for large antenna arrays
\cite{K2012,Pan2014,Zhang2018}, a fixed-position array (FPA) cannot
adapt its physical radiation characteristics to either the user
distribution or the eavesdropper angular uncertainty. 
These limitations have motivated the exploitation of physically reconfigurable antennas for secure multicast transmission, where additional spatial flexibility can be introduced through antenna reconfiguration.
A representative example is the MA, for which secure analog beamforming has been shown to improve multicast secrecy through antenna-position optimization \cite{Xiong2026}. 
Nevertheless, such gains generally rely on translational antenna movement, which may require additional movement space and mechanical control.
In contrast, rotatable antennas (RAs) provide orientation-domain reconfigurability by adjusting the antenna or array orientations without requiring translational repositioning of the individual antenna elements \cite{Zheng2026,Zheng20262}.
Such a compact form of physical reconfigurability is particularly attractive for secure multicast transmission, where the radiation characteristics need to be simultaneously adapted toward multiple legitimate users and uncertain eavesdropper directions. However, RA-enabled robust secure multicast transmission under sensing-induced angular uncertainty remains largely unexplored.

According to the rotation granularity, existing RA configurations can be broadly
classified into array-level and element-level designs
\cite{Zheng20262}. In an array-level design, the entire array rotates
as a whole, whereas an element-level design allows the boresight of
each antenna to be adjusted independently. The former provides common
aperture reorientation, while the latter offers finer control of the
orientation-dependent antenna gains.
Recent studies have investigated RA in wireless communication and
sensing. For communication systems, RA channel modeling and joint
boresight-beamforming optimization were developed in
\cite{Zheng20261}, channel estimation was studied in
\cite{XiongRA2025}, and RA-assisted secure transmission was considered
in \cite{Dai2025}. RA has also been applied to ISAC systems. 
Far-field RA-enabled ISAC was investigated in \cite{Zhou2025,Wang2026}, where array rotation was jointly designed with transmit beamforming.
Near-field RA-enabled ISAC and the associated range- and angle-estimation CRBs
were investigated in \cite{Zhang20262,Wang20261}.

Although these studies have demonstrated the benefits of antenna
rotation in both communication and sensing, most existing designs consider either a common array
rotation or independent element orientations. To more fully explore
the potential of RA, we jointly consider these two rotation levels,
which provide complementary aperture reorientation and
radiation-pattern control.
This joint design is well suited to maintaining multicast
coverage toward legitimate users while suppressing leakage over
uncertain eavesdropper directions.
However, jointly optimizing the array-level and element-level
rotations for the distinct objectives of sensing and secure
communication remains a challenging problem due to their nonlinear coupling.
In the sensing stage, RA configuration affects the eavesdropper AoD estimation accuracy
and hence the width of the resulting uncertainty region. In the
communication stage, RA configuration must be redesigned together
with the analog beamformer to suppress the worst-case leakage over
this region. Therefore, the angular uncertainty is not specified
\emph{a priori}, but is determined by the sensing accuracy achieved
under the sensing-stage RA configuration. 

Motivated by the above considerations, this paper investigates
sensing-aided secure multicast transmission in an RA-enabled ISAC
system with analog beamforming. Both array-level and element-level
rotations are considered to fully exploit the rotational flexibility
of RA. The sensing-stage RA configuration is optimized to improve
eavesdropper AoD estimation and construct a CRB-based angular
uncertainty region. The RA configuration is then redesigned together
with the analog beamformer to maximize the worst-case secrecy rate
over this region. The main contributions of this paper are summarized
as follows.

\begin{itemize}
    \item We develop an RA-enabled sensing-aided secure multicast framework
    that combines analog beamforming with array-level and element-level
    rotations. Separate RA configurations are adopted for sensing and
    communication to accommodate the different objectives of the two
    stages. In the sensing stage, the RA configuration is optimized to
    improve eavesdropper AoD estimation. In the communication stage, the RA configuration and
    analog beamformer are jointly optimized to maximize the worst-case
    secrecy rate over this region.

    \item We derive the maximum likelihood estimator (MLE) of the
    eavesdropper AoD and the corresponding CRB. The MLE determines the
    center of the angular uncertainty region, while the CRB evaluated
    at the MLE under the optimized sensing-stage RA configuration
    determines its width. The sensing-stage problem is equivalently
    reformulated as maximizing the worst-case effective angular
    information over the search region and is solved using a cyclic
    coordinate search algorithm.

    \item We formulate a robust max-min secrecy-rate problem for the
    communication stage by jointly optimizing the constant-modulus
    analog beamformer and both the array-level and element-level rotations. After angular
    discretization and smooth approximation, a product-space joint
    optimization (PSJO) framework is developed. The analog beamformer
    is characterized on a complex circle manifold, while the bounded
    rotation variables are handled by projection onto a
    box-constrained set. A conjugate-gradient method with Armijo line
    search is developed to jointly update these variables.

    \item Numerical results validate the convergence and effectiveness
    of the proposed algorithms. It is further shown that: (i) both
    algorithms exhibit monotonic convergence within a moderate number
    of iterations; (ii) the proposed sensing-stage RA optimization significantly improves eavesdropper AoD estimation accuracy;
    and (iii) the proposed sensing-aided secure transmission design achieves higher worst-case multicast secrecy rates than the benchmark schemes while maintaining robust performance against eavesdropper AoD-estimation errors.
\end{itemize}

Notation: Vectors and matrices are denoted by bold lowercase and uppercase letters, respectively. 
$(\cdot)^{*}$, $(\cdot)^{T}$, $(\cdot)^{H}$, $\|\cdot\|_2$, $|\cdot|$, and $\|\cdot\|_F$ denote the conjugate, transpose, conjugate transpose, Euclidean norm, absolute value, Frobenius matrix norm, respectively. $\mathbb{C}^{M \times N}$ and $\mathbb{R}^{M \times N}$ are the sets for complex and real matrices of $M \times N$ dimensions, respectively. $\mathbf{I}_N$ is the identity matrix of order $N$.
$\operatorname{Re}{(\cdot)}$ and $\operatorname{Im}{(\cdot)}$ denote the real and imaginary parts of a complex number.
\section{System Model}
\subsection{RA-BS Model}

\begin{figure*}[!t]
    \centering
        \begin{subfigure}[t]{0.49\linewidth}
        \centering
        \includegraphics[height=0.49\linewidth]{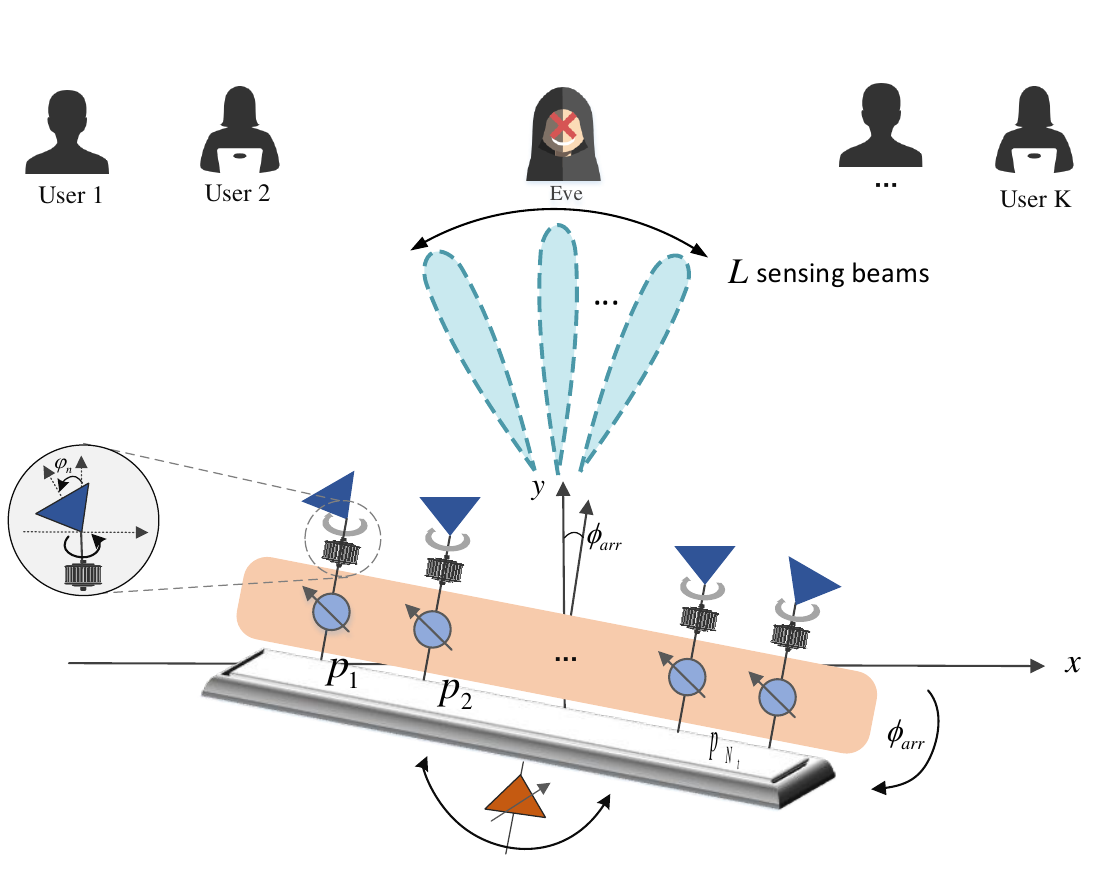}
        \caption{Eavesdropper sensing stage.}
        \label{fig:sensing_stage}
    \end{subfigure}
        \hfill
        \begin{subfigure}[t]{0.49\linewidth}
        \centering
        \includegraphics[height=0.49\linewidth]{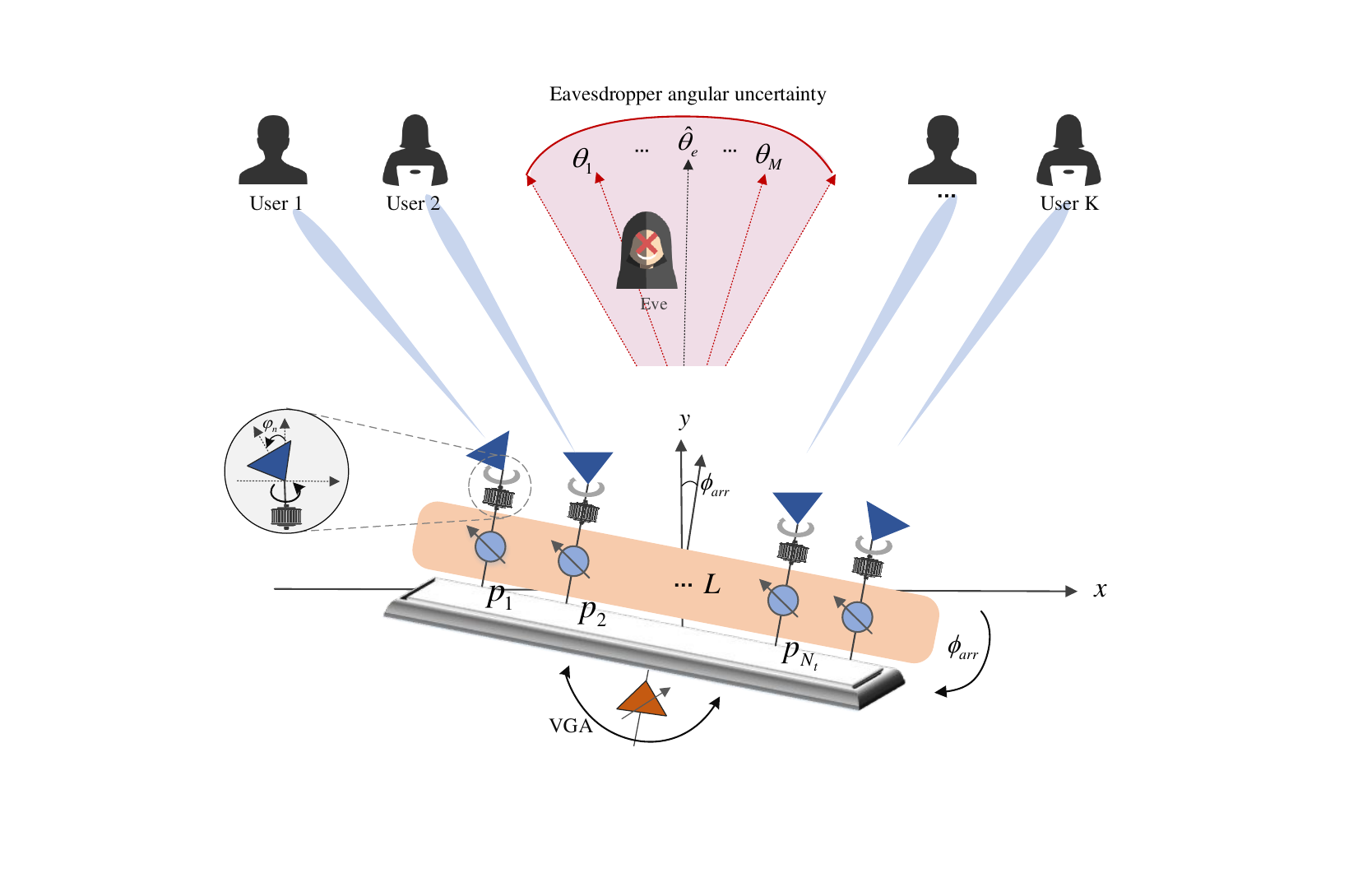}
        \caption{Secure communication stage.}
        \label{fig:communication_stage}
    \end{subfigure}
    \caption{\justifying Illustration of the proposed RA-aided
    sensing and secure multicast system.}
    \label{fig:system_model}
\end{figure*}
{
As illustrated in Fig.~\ref{fig:system_model}, we consider an RA-enabled multi-user multiple-input single-output (MU-MISO) secure multicast system, which includes a dual-functional radar-communication BS, $K$ single-antenna legitimate users, and a potential eavesdropper. The BS employs an analog beamforming architecture with phase shifters driven by a single variable gain amplifier (VGA) to multicast a confidential message to all legitimate users. 
The BS is equipped with an $N_t$-element transmit uniform linear array (ULA) and an $N_r$-element receive ULA. 
Since transmit-side rotation affects both sensing illumination and secure multicast transmission, we focus on reconfiguring the transmit array and employ a conventional fixed receive ULA for echo acquisition. 
The transmit ULA is mounted on a rotatable platform, and each transmit antenna element can independently
adjust its boresight orientation.}

The origin of the global coordinate system is placed at the center of the transmit ULA. Before rotation, the transmit ULA lies along the \(x\)-axis, with its initial broadside pointing toward the positive \(y\)-axis. Accordingly, the position of the \(n\)-th transmit antenna element is given by
\begin{equation}
\bar{\mathbf c}_n=
[m^{\rm t}_nd,\;0]^T,n\in\{1,2,\ldots,N_t\},
\end{equation}
where \(d=\frac{\lambda}{2}\) denotes inter-element spacing with $\lambda$ being the wavelength, and $m^{\rm t}_n=n-\frac{N_t+1}{2}, n \in \{1,2,...,N_t\}$ is the normalized position index of the $n$-th transmit RA element.
Since a two-dimensional system is considered, the array-level rotation can be described by a single angle \(\phi_{\rm arr}\in[-\phi_{\max},\phi_{\max}]\), where \(\phi_{\max}\) denotes the maximum allowable array-level rotation angle. In addition to the global array rotation, each transmit RA element can independently adjust its boresight within the \(x\)-\(y\) plane. 
Its local boresight direction is represented by
\begin{equation}
    \mathbf f_n=[\sin\varphi_n, \cos\varphi_n]^T, \forall n\in\{1,2,\ldots,N_t\},
\end{equation}
where \(\varphi_n\) denotes the local pointing angle with respect to the local broadside direction of the rotated transmit ULA.
To reflect practical rotation limitation and avoid excessive antenna coupling \cite{Kumar2023}, the local pointing angle is constrained as follows:
\begin{equation}
-\varphi_{\max} \le \varphi_n\le \varphi_{\max},\forall n,
\end{equation}
where \(\varphi_{\max}\in[0,\pi/2]\) denotes the maximum allowable rotation angle of each RA element.
By stacking the local boresight vectors of all transmit RA elements, the local boresight matrix is defined as follows:
\begin{equation}
\mathbf F \triangleq
\left[
\mathbf f_1,\mathbf f_2,\ldots,\mathbf f_{N_t}
\right]\in\mathbb R^{2\times N_t}.
\end{equation}

After the array-level rotation, the position of the \(n\)-th transmit antenna element in the global coordinate system becomes
\begin{equation}
\mathbf c_n=\mathbf R(\phi_{\rm arr})\bar{\mathbf c}_n,n\in\{1,2,\ldots,N_t\},
\end{equation}
where
\begin{equation}
    \mathbf R(\phi_{\rm arr})
    =
    \begin{bmatrix}
        \cos\phi_{\rm arr} & \sin\phi_{\rm arr}\\
        -\sin\phi_{\rm arr} & \cos\phi_{\rm arr}
    \end{bmatrix}
    \in\mathbb R^{2\times2}
\end{equation}
is the rotation matrix. Accordingly, the boresight matrix in the
global coordinate system is given by
\begin{equation}
\bar{\mathbf F}
=
\mathbf R(\phi_{\rm arr})\mathbf F
\in\mathbb R^{2\times N_t}.
\end{equation}

Thus, the spatial configuration of each transmit RA is jointly determined by the array-level rotation \(\phi_{\rm arr}\) and the element-level pointing angle \(\varphi_n\).
\subsection{Channel Model}
We consider a far-field quasi-static line-of-sight (LoS) channel model, which is
representative of millimeter-wave (mmWave) systems where the LoS
component is often dominant due to severe propagation loss and
relatively sparse scattering at high frequencies \cite{Cao2025}.
The position vector of the \(k\)-th legitimate user is denoted by
\(\mathbf q_k=[r_k\sin\psi_k,\;r_k\cos\psi_k]^T\),
where \(r_k\) is the distance between user \(k\) and the BS, and \(\psi_k \in (-\frac{\pi}{2},\frac{\pi}{2})\) denotes the azimuth angle of user \(k\) measured from the positive $y$-axis in the global coordinate system. Accordingly, the transmit steering vector of the rotatable ULA toward direction \(\psi_k\) can be expressed as follows:
\begin{equation}
\mathbf a_t(\tilde{\psi}_{k})=\frac{1}{\sqrt{N_t}}
\left[
e^{j\pi m^{\rm t}_1 \sin(\tilde{\psi}_{k})},
\ldots,
e^{j\pi m^{\rm t}_{N_t} \sin(\tilde{\psi}_{k})}
\right]^T,
\end{equation}
where \(\tilde{\psi}_{k} = \psi_k-\phi_{\rm arr}\) represents the user direction relative to the rotated transmit array. The receive steering vector depends only on the incident direction. Therefore, the receive steering vector is given by
\begin{equation}
\mathbf a_r(\psi_k)=\frac{1}{\sqrt{N_r}}
\left[
e^{j\pi m^{\rm r}_1 \sin\psi_k},
\ldots,
e^{j\pi m^{\rm r}_{N_r} \sin\psi_k}
\right]^T,
\end{equation}
where $m^{\rm r}_n =n-\frac{N_r+1}{2}, n \in \{1,2,...,N_r\}$ is the normalized position index of the $n$-th receive antenna element.

The effective antenna gain of each RA element depends on both its boresight orientation and the signal propagation direction. In this paper, we consider the following widely used directional gain pattern for each RA element\cite{Balanis2015}:
\begin{equation} \label{G_origin}
G(\theta)=
\begin{cases}
G_0 \cos^{2p}\!(\theta), & \theta\in\left[0,\frac{\pi}{2}\right],\\
0, & \text{otherwise},
\end{cases}
\end{equation}
where \(G_0=2(2p+1)\) is the maximum boresight gain to satisfy the law of power conservation, and $p\geq 0$ is the directivity factor that characterizes the beamwidth of the antenna main lobe.
% Accordingly, the $n$-th RA's directional gain in the direction of the user $k$ is given by
% \begin{equation}
%     G_{k,n}=G_0 \cos^{2p}\!(\theta_{k,n}),
% \end{equation}
% where $\cos(\theta_{k,n}) \triangleq \bar {\mathbf f}_n^T {\mathbf u}_{k,n}$ is the projection of the direction vector of the $k$-th user ${\mathbf u}_{k,n}\triangleq \frac{\mathbf q_k-\mathbf c_n}{\| \mathbf q_k-\mathbf c_n\| }$ onto the pointing vector of the $n$-th RA. Although the propagation phase is modeled under the far-field approximation, the element-wise directional gain is evaluated using the exact geometric direction from each transmit antenna element to the user.
Accordingly, the $n$-th RA's directional gain in the direction of user $k$
is given by
\begin{equation}
    G_{k,n}=G_0\left[\bar{\mathbf f}_n^T\mathbf u_{k,n}\right]_+^{2p},
    \label{G_kn}
\end{equation}
where $[x]_+\triangleq\max\{x,0\}$, 
$\bar{\mathbf f}_n^T\mathbf u_{k,n}=\cos(\theta_{k,n})$, and
${\mathbf u}_{k,n}\triangleq
\frac{\mathbf q_k-\mathbf c_n}{\|\mathbf q_k-\mathbf c_n\|}$
is the exact geometric direction from the $n$-th transmit RA element
to user $k$. 
Consequently, the channel coefficient from the $n$-th transmit RA element to user $k$ is expressed as follows:
\begin{equation}
    h_{k,n}=\beta_k \sqrt{G_{k,n}}a_{k,n},
\end{equation}
where $\beta_k= \frac{\lambda}{4 \pi r_k}e^{j\frac{2 \pi r_k}{\lambda}}$ is the complex-valued path gain from the BS to user $k$, and $a_{k,n}$ denotes the $n$-th entry of $\mathbf a_t(\tilde \psi_k)$.
Hence, the communication channel vector $\mathbf h_k \in \mathbb{C}^{N_t \times 1} $ between user $k$ and the BS can be written as follows:
\begin{equation}
    \mathbf h_{k}=[h_{k,1},h_{k,2},...,h_{k,N_t}]^T.
\end{equation}
Similarly, the channel coefficient from the $n$-th transmit RA element to the eavesdropper can be defined in the same manner, and the corresponding channel vector is denoted by $\mathbf h_e \in \mathbb{C}^{N_t\times 1}$, where $\beta_e= \frac{\lambda}{4 \pi r_e}e^{j\frac{2 \pi r_e}{\lambda}}$ denotes the complex-valued path gain between the BS and the eavesdropper, and $r_e$ denotes the BS-eavesdropper distance.

\subsection{Signal Model}
Let $s \in \mathbb C$ be the coded confidential information symbol for users with $\mathbb{E} [|s|^2]=1$. The transmit signal is given by
\begin{equation}
    \mathbf x = \sqrt{P_t} \mathbf w s \in \mathbb C^{N_t},
\end{equation}
where $P_t>0$ represents the communication transmit power and $\mathbf w\in \mathbb C^{N_t}$ denotes the analog beamforming vector implemented by the phase shifters with constant modulus (CM), i.e.,
$
    |w_n| = \frac{1}{\sqrt{N_t}}, \forall n=1,...,N_t.
$

Then the received signals at the $k$-th user and the eavesdropper are respectively given by
\begin{align}
{y}_k &= \sqrt{P_t}\,\mathbf{h}_k^{H}\mathbf{w}s + n_k,\\
{y}_e &= \sqrt{P_t}\,\mathbf{h}_e^{H}\mathbf{w}s + n_e,
\end{align}
where $n_k\sim\mathcal{CN}(0,\sigma_c^2)$ and
$n_e\sim\mathcal{CN}(0,\sigma_e^2)$ denote independent zero-mean
additive white Gaussian noise (AWGN) samples with variances $\sigma_c^2$ and $\sigma_e^2$,
respectively.

Hence, the achievable rate of user $k$ is given by
\begin{equation}
    R_k=\log_2 (1+ \gamma_c  |\mathbf h_k^H \mathbf w |^2),
\end{equation}
where $\gamma_c = P_t/\sigma_c^2$ denotes the transmit signal-to-noise ratio (SNR).
Similarly, the achievable rate of the eavesdropper is 
\begin{equation}
    R_e=\log_2 (1+ \gamma_e  |\mathbf h_e^H \mathbf w |^2).
\end{equation}

According to the information-theoretic principles of secure communication \cite{L2017}, the achievable secrecy rate for user $k$ can be written as follows:
\begin{equation}
    R_k^{sec}= [R_k-R_e]^+,
\end{equation}

\section{Sensing CRB Analysis and Problem Formulation}
In this section, we first establish a theoretical benchmark for the sensing performance of the BS by deriving the CRB for the estimation of the eavesdropper’s AoD.
Building on this analysis, we then formulate a robust design problem that aims to maximize the worst-case secrecy rate under the derived estimation uncertainty by jointly optimizing the RA rotations and the analog beamforming vector.
\subsection{Echo Signal Model and Angle Estimation}
In the considered system, cooperative users periodically
report their locations to the BS via GPS-based feedback
\cite{Yang2025}, whereas the direction of the passive eavesdropper is unavailable. Therefore the BS estimates the
eavesdropper's AoD from the received echo signals before secure
transmission.
For a given RA configuration $\Phi_s =
\left(
\phi_{\mathrm{arr}}^s,
\bm \varphi^s
\right)$, the effective transmit sensing response can be written as follows:
\begin{equation}
\mathbf b(\theta)
\triangleq
\operatorname{diag}\!\left(
\sqrt{G_1(\theta)},\ldots,
\sqrt{G_{N_t}(\theta)}
\right)
\mathbf a_t\!\left(\theta-\phi_{\mathrm{arr}}^s\right).
\label{sensing_response}
\end{equation}
At the $l$-th snapshot $(1\leq l\leq L)$, the received echo signal is
\begin{equation}
\mathbf y_s(l)=\sqrt{P_s}\,\beta_s\,\mathbf a_r(\theta_e)\mathbf b^H(\theta_e) \mathbf x_s(l)+\mathbf n_s(l),
\end{equation}
where \(\theta_e\) is the true angular direction of the eavesdropper, \(P_s\) denotes the sensing transmit power, \(\mathbf x_s(l)\) is the probing symbol satisfying \(\mathbb E[\|\mathbf x_s(l)\|^2]=1\),
$\beta_s=
\sqrt{\frac{\lambda^2\alpha}{64\pi^3r_e^4}}
\exp\!\left(j\frac{4\pi r_e}{\lambda}\right)$
denotes the round-trip sensing channel coefficient, \(\alpha\) is the radar cross section (RCS) of the target, \(r_e\) is the distance between the BS and the eavesdropper, and \(\mathbf n_s(l)\sim\mathcal{CN}(\mathbf 0,\sigma_s^2\mathbf I_{N_r})\) is the additive white Gaussian noise.

By collecting all the received echoes over the \(L\) sensing beams, the sensing observations can be stacked into the following matrix form:
\begin{equation}
\begin{split}
\mathbf Y_s
&=
\big[\mathbf y_s(1),\mathbf y_s(2),\ldots,\mathbf y_s(L)\big] \\
&=
\beta_s\mathbf a_r(\theta_e)\mathbf b^H(\theta_e)\mathbf X_s+\mathbf N_s,
\end{split}
\end{equation}
where $\mathbf X_s=
\sqrt{P_s}\,[\mathbf x_s(1),\mathbf x_s(2),\ldots,\mathbf x_s(L)] \in \mathbb C^{N_t \times L}$,
and $\mathbf N_s=
[\mathbf n_s(1),\mathbf n_s(2),\ldots,\mathbf n_s(L)]$.
The transmit covariance matrix associated with \(\mathbf X_s\) can be expressed as follows:
\begin{equation}\label{Rx}
\mathbf R_X=\frac{1}{L}\mathbf X_s\mathbf X_s^H=\frac{P_s}{N_t}\mathbf I_{N_t},
\end{equation}
Note that the equation in \eqref{Rx} holds only when $L\geq N_t$ \cite{Lu2024,Cao2025}.
Based on \(\mathbf Y_s\), the BS employs MLE to estimate the direction of the single eavesdropper owing to its favorable statistical efficiency and its asymptotic ability to approach the CRB at sufficiently high sensing SNR.
\begin{lemma}
The MLE of the eavesdropper direction is given by
\begin{equation} \label{MLE}
\hat{\theta}_e
=
\arg\max_{\theta\in\Theta_1}
\frac{\left|
\mathbf a_r^H(\theta)\mathbf Y_s\mathbf X_s^H\mathbf b(\theta)
\right|^2}{\mathbf b^H(\theta) \mathbf X_s \mathbf X^H_s \mathbf b(\theta)}
,
\end{equation}
where $\Theta_1 =[\theta_{\mathrm L},\theta_{\mathrm U}]$ denotes the angular search region. It is assumed
to be coarsely obtained by an external tracker before the sensing
stage.

\end{lemma}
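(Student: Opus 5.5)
The plan is to treat the sensing observation as a deterministic-signal-in-white-Gaussian-noise model and to concentrate out the nuisance parameter. The unknowns are the AoD $\theta$ and the complex round-trip coefficient $\beta_s$, which is not known at the BS since it depends on $r_e$ and $\alpha$. Because $\mathbf N_s$ has i.i.d. $\mathcal{CN}(0,\sigma_s^2)$ entries, the log-likelihood of $\mathbf Y_s$ is, up to constants,
\begin{equation*}
\ell(\theta,\beta)=-\frac{1}{\sigma_s^2}\big\|\mathbf Y_s-\beta\,\mathbf a_r(\theta)\mathbf b^H(\theta)\mathbf X_s\big\|_F^2 .
\end{equation*}
Maximizing $\ell$ is therefore equivalent to minimizing the Frobenius-norm residual jointly over $\theta\in\Theta_1$ and $\beta\in\mathbb C$.

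First, for fixed $\theta$, I would solve the unconstrained linear least-squares problem in $\beta$. Set $\mathbf M(\theta)\triangleq\mathbf a_r(\theta)\mathbf b^H(\theta)\mathbf X_s$. Expanding the norm gives a quadratic in $\beta$ whose minimizer is
\begin{equation*}
\hat\beta(\theta)=\frac{\operatorname{tr}\!\big(\mathbf M^H(\theta)\mathbf Y_s\big)}{\|\mathbf M(\theta)\|_F^2}.
\end{equation*}
Substituting $\hat\beta(\theta)$ back yields the concentrated objective
\begin{equation*}
\|\mathbf Y_s\|_F^2-\frac{|\operatorname{tr}(\mathbf M^H\mathbf Y_s)|^2}{\|\mathbf M\|_F^2}.
\end{equation*}
Since $\|\mathbf Y_s\|_F^2$ does not depend on $\theta$, the MLE of $\theta$ maximizes the ratio $|\operatorname{tr}(\mathbf M^H\mathbf Y_s)|^2/\|\mathbf M\|_F^2$.

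Second, I would simplify the numerator and denominator using the rank-one structure of $\mathbf M(\theta)$.

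For the numerator, the cyclic property of the trace gives
\begin{equation*}
\operatorname{tr}(\mathbf M^H\mathbf Y_s)=\operatorname{tr}\!\big(\mathbf X_s^H\mathbf b\,\mathbf a_r^H\mathbf Y_s\big)=\mathbf a_r^H(\theta)\mathbf Y_s\mathbf X_s^H\mathbf b(\theta).
\end{equation*}

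For the denominator,
\begin{equation*}
\|\mathbf M\|_F^2=\|\mathbf a_r(\theta)\|_2^2\,\mathbf b^H(\theta)\mathbf X_s\mathbf X_s^H\mathbf b(\theta).
\end{equation*}
The normalization of the receive steering vector gives $\|\mathbf a_r(\theta)\|_2^2=1$, so the denominator reduces to $\mathbf b^H(\theta)\mathbf X_s\mathbf X_s^H\mathbf b(\theta)$. This yields exactly \eqref{MLE}.

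The point needing care is the restriction to $\Theta_1$ and the well-definedness of the ratio. With a directional element gain, $\mathbf b(\theta)$ can vanish for directions outside all elements' half-space, which would make the denominator zero. I would note that under $\mathbf R_X=\frac{P_s}{N_t}\mathbf I_{N_t}$ from \eqref{Rx} (valid for $L\ge N_t$), the denominator equals $\frac{LP_s}{N_t}\|\mathbf b(\theta)\|_2^2$. This is strictly positive whenever at least one element has nonzero gain toward $\theta$, which I would assume holds on the tracker-provided region $\Theta_1$. With that assumption, $\hat\beta$ is well defined and the concentration step is legitimate. The remaining steps are routine algebra, so this well-posedness remark is the only step needing attention.
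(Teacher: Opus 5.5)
Your proposal is correct and follows the same route as the paper's Appendix~A: write the Gaussian likelihood, solve the linear least-squares problem for the nuisance coefficient $\beta_s$ at fixed $\theta$, and substitute back, leaving $\|\mathbf Y_s\|_F^2$ minus the ratio in the statement, which is the only $\theta$-dependent term. Your trace-based simplification, your explicit use of $\|\mathbf a_r(\theta)\|_2^2=1$ (used implicitly in Appendix~A and stated only in Appendix~B), and your positivity remark on $\mathbf b^H(\theta)\mathbf X_s\mathbf X_s^H\mathbf b(\theta)$ make explicit steps that the paper leaves unstated.
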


\begin{proof}
See Appendix A.
\end{proof}

The above lemma provides a point estimate of the eavesdropper direction. However, due to noise and finite sensing resources, the estimate \(\hat{\theta}_e\) is generally imperfect. 
\subsection{CRB-Based Angular Uncertainty Modeling}
To quantify the accuracy of the angle estimate $\hat{\theta}_e$, we
consider the mean-square error (MSE)
$\mathbb{E}\{(\theta_e-\hat{\theta}_e)^2\}$. Since its closed-form
expression is difficult to obtain, we employ the CRB as a tractable
accuracy metric, which serves as an asymptotically tight lower bound
on the MSE at moderate-to-high sensing SNRs.

\begin{lemma}\label{lemma2}
For a sensing configuration $\Phi_s$, the CRB for estimating the eavesdropper direction is
\begin{equation}
\operatorname{CRB}_{\theta}
\left(
\theta;
\Phi_s
\right)
=
\frac{N_t\sigma_s^2}
{
\displaystyle
2LP_s|\beta_s|^2
\Psi
\left(
\theta;
\Phi_s
\right)
},
\label{CRB}
\end{equation}
where the effective angular information is
\begin{equation}
\begin{aligned}
&\Psi
\left(
\theta;
\Phi_s
\right)
={}
\left\|
\dot{\mathbf a}_r(\theta)
\right\|^2
\left\|
\mathbf b
\left(
\theta;
\Phi_s
\right)
\right\|^2
+
\left\|
\dot{\mathbf b}
\left(
\theta;
\Phi_s
\right)
\right\|^2\\
&+
2\operatorname{Re}
\Bigg\{
\mathbf a_r^H(\theta)
\dot{\mathbf a}_r(\theta)
\,
\mathbf b^H
\left(
\theta;
\Phi_s
\right)
\dot{\mathbf b}
\left(
\theta;
\Phi_s
\right)
\Bigg\}\\
&-
\frac{|
\mathbf a_r^H(\theta)
\dot{\mathbf a}_r(\theta)
\|
\mathbf b
(
\theta;
\Phi_s
)
\|^2
+
\dot{\mathbf b}^{H}
(
\theta;
\Phi_s
)
\mathbf b
(
\theta;
\Phi_s
)
|^2
}
{
\left\|
\mathbf b
\left(
\theta;
\Phi_s
\right)
\right\|^2
},
\label{effective_angular_information}
\end{aligned}
\end{equation}
with
$
\dot{\mathbf a}_r(\theta)
\triangleq
\frac{\partial\mathbf a_r(\theta)}{\partial\theta}$ and
$\dot{\mathbf b}
\left(
\theta;
\Phi_s
\right)
\triangleq
\frac{
\partial
\mathbf b
\left(
\theta;
\Phi_s
\right)
}{
\partial\theta
}$.
\end{lemma}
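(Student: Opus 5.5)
We treat $\mathbf Y_s$ as a complex Gaussian observation, $\operatorname{vec}(\mathbf Y_s)\sim\mathcal{CN}(\boldsymbol\mu,\sigma_s^2\mathbf I)$. The unknown real parameter vector is $\boldsymbol\xi=[\theta,\operatorname{Re}\beta_s,\operatorname{Im}\beta_s]^T$, and the mean is
\begin{equation*}
\boldsymbol\mu=\operatorname{vec}\big(\beta_s\mathbf A(\theta)\mathbf X_s\big),\qquad \mathbf A(\theta)\triangleq\mathbf a_r(\theta)\mathbf b^H(\theta).
\end{equation*}
The Slepian--Bangs formula gives the Fisher information matrix (FIM) entries
\begin{equation*}
[\mathbf J]_{ij}=\frac{2}{\sigma_s^2}\operatorname{Re}\Big\{\operatorname{tr}\Big(\frac{\partial \mathbf M^H}{\partial\xi_i}\frac{\partial \mathbf M}{\partial\xi_j}\Big)\Big\},\qquad \mathbf M=\beta_s\mathbf A\mathbf X_s .
\end{equation*}
Substituting $\mathbf X_s\mathbf X_s^H=L\mathbf R_X=\frac{LP_s}{N_t}\mathbf I_{N_t}$ from \eqref{Rx}, every trace reduces to a trace of products of $\mathbf A$, $\dot{\mathbf A}$ and their Hermitians, scaled by $\frac{LP_s}{N_t}$.

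The blocks are then computed explicitly. Write $\dot{\mathbf A}=\dot{\mathbf a}_r\mathbf b^H+\mathbf a_r\dot{\mathbf b}^H$. For $J_{\theta\theta}$, the key quantity is
\begin{equation*}
\operatorname{tr}(\dot{\mathbf A}^H\dot{\mathbf A})=\|\dot{\mathbf a}_r\|^2\|\mathbf b\|^2+\|\mathbf a_r\|^2\|\dot{\mathbf b}\|^2+2\operatorname{Re}\{\mathbf a_r^H\dot{\mathbf a}_r\,\mathbf b^H\dot{\mathbf b}\},
\end{equation*}
with $\|\mathbf a_r\|^2=1$. This gives $J_{\theta\theta}=\frac{2LP_s|\beta_s|^2}{N_t\sigma_s^2}\operatorname{tr}(\dot{\mathbf A}^H\dot{\mathbf A})$.

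For the coupling, $\operatorname{tr}(\mathbf A^H\dot{\mathbf A})=\mathbf a_r^H\dot{\mathbf a}_r\|\mathbf b\|^2+\dot{\mathbf b}^H\mathbf b$, up to the conjugation convention, which must be checked so that the term matches the numerator in \eqref{effective_angular_information}. The cross terms are $J_{\theta,\operatorname{Re}\beta}=\frac{2LP_s}{N_t\sigma_s^2}\operatorname{Re}\{\beta_s^*\operatorname{tr}(\mathbf A^H\dot{\mathbf A})\}$ and the analogous imaginary-part term. The nuisance block is $\frac{2LP_s}{N_t\sigma_s^2}\|\mathbf b\|^2\mathbf I_2$, because $\operatorname{tr}(\mathbf A^H\mathbf A)=\|\mathbf b\|^2$.

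Finally, take the Schur complement with respect to the $\beta_s$ block:
\begin{equation*}
\operatorname{CRB}_\theta=\big(J_{\theta\theta}-\mathbf J_{\theta\beta}\mathbf J_{\beta\beta}^{-1}\mathbf J_{\beta\theta}\big)^{-1}.
\end{equation*}
The correction term collapses, since $(\operatorname{Re}\{\beta^*z\})^2+(\operatorname{Re}\{-j\beta^*z\})^2=|\beta|^2|z|^2$ with $z=\operatorname{tr}(\mathbf A^H\dot{\mathbf A})$. It therefore equals $\frac{2LP_s|\beta_s|^2}{N_t\sigma_s^2}\frac{|z|^2}{\|\mathbf b\|^2}$. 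Factoring out $\frac{2LP_s|\beta_s|^2}{N_t\sigma_s^2}$ leaves exactly $\Psi(\theta;\Phi_s)$, and inverting gives \eqref{CRB}.

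The main obstacle is bookkeeping. The conjugation placement in $z$ must be consistent, though $|z|$ is invariant under conjugating either summand only if both are conjugated together, so care is needed. The derivative $\dot{\mathbf b}$ must also be well defined. This requires differentiating $\sqrt{G_n(\theta)}$, and we assume $\theta$ lies where $[\cdot]_+$ is active and smooth. Since $\mathbf b$ is simply kept symbolic, this does not affect the derivation.
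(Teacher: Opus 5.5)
Your proposal is correct and follows the paper's route: a Slepian--Bangs FIM over $(\theta,\operatorname{Re}\beta_s,\operatorname{Im}\beta_s)$, a nuisance block proportional to $\mathbf I_2$, and a Schur complement. The only difference is that you substitute $\mathbf X_s\mathbf X_s^H=\frac{LP_s}{N_t}\mathbf I_{N_t}$ at the outset, whereas the paper does so after forming $\dot{\mathbf q}^H\dot{\mathbf q}-|\dot{\mathbf q}^H\mathbf q|^2/\mathbf q^H\mathbf q$. Your $z=\operatorname{tr}(\mathbf A^H\dot{\mathbf A})=\mathbf a_r^H\dot{\mathbf a}_r\|\mathbf b\|^2+\dot{\mathbf b}^H\mathbf b$ already matches the numerator of $\Psi$ exactly. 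The stray conjugate in your cross term, which should read $\operatorname{Re}\{\beta_s^*\operatorname{tr}(\dot{\mathbf A}^H\mathbf A)\}$, is harmless because only $|\beta_s|^2|z|^2$ survives the Schur complement.
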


\begin{proof}
See Appendix~B.
\end{proof}

Lemma~\ref{lemma2} shows that the sensing-stage RA rotations
affect the CRB through the transmit response
$\mathbf b$ and its angular derivative $\dot{\mathbf b}$.
The array rotation $\phi_{\mathrm{arr}}^s$ changes both the
transmit phase progression and the directional gains of all
RA elements. In contrast, each local rotation
$\varphi_n^s$ only adjusts the directional gain of the
corresponding element. Therefore, $\Psi$ jointly captures the
transmit response power, its angular sensitivity, and their
correlation. This coupling motivates the joint optimization of
the array-level and element-level rotations to minimize the CRB.

The eavesdropper direction $\theta_e$ is treated as an unknown
deterministic parameter, while $\Theta_1$ specifies only its search
region rather than a prior distribution. To
facilitate a tractable robust design, we approximate
the MLE error as a zero-mean Gaussian random variable with variance
given by the corresponding CRB \cite{Su2023}.
Specifically, the AoD estimation error is modeled as
$    \theta_e-\hat{\theta}_e \sim 
    \mathcal N\!\left(0,\mathrm{CRB}(\hat{\theta}_e)\right).
$
Under this model, the true eavesdropper direction lies in the uncertainty region
$\Theta_2 $ with probability approximately $0.9973$\cite{Chandola2009}, where $\Theta_2 =[\hat{\theta}_e-3\sqrt{\mathrm{CRB}(\hat{\theta}_e)},\hat{\theta}_e+3\sqrt{\mathrm{CRB}(\hat{\theta}_e)}]$.
Since the CRBs are determined
by the RA configuration, this AoD estimation range is also a
function of the antenna geometry. This uncertainty set specifies the
eavesdropper directions considered by the subsequent worst-case
secure communication design. The sensing-stage RA rotations therefore
affect secure transmission through the uncertainty passed to
the communication stage.

In principle, both the distance and the direction of the eavesdropper
can be inferred from the sensing observations through the echo delay
and spatial response, respectively. In the considered far-field
model, distance mainly affects the path gain, whereas angular
uncertainty changes the array response and may cause beamforming
mismatch. Accordingly, as in \cite{Cao2025}, we neglect distance
estimation errors and focus on angular estimation errors in the
subsequent robust design.
\subsection{Problem Formulation}
\label{subsec:problem_formulation}
Leveraging the CRB analysis, our goal is to maximize the worst-case secrecy rate by jointly designing the sensing and communication stages.
In the sensing stage, the RA configuration $\Phi_s$ is optimized to minimize the CRB and narrow the uncertainty region $\Theta_2$. A smaller uncertainty region allows for more precisely targeted robust beamforming \cite{Lyu2024}.
In the communication stage, given the estimated direction $\hat{\theta}_e$ and the uncertainty region $\Theta_2$, the RA configuration $\Phi_c$ and the analog beamforming vector $\mathbf w$ are jointly optimized to maximize the worst-case secrecy rate over all possible eavesdropper directions.
The overall joint optimization problem can be formulated as
\begin{subequations}
\label{overall_problem}
\begin{align}
\max_{\Phi_s,\Phi_c,\mathbf w}\quad
&
\min_{\substack{
k\in\mathcal K,\,
\hat{\theta}_e\in\Theta_1,\,
\theta_e\in\Theta_2
}}
\left[
R_k(\mathbf w,\Phi_c)
-
R_e(\theta_e;\mathbf w,\Phi_c)
\right]^+
\label{overall_problem_objective}
\\
\mathrm{s.t.}\quad
&
|w_n|=\frac{1}{\sqrt{N_t}},
\qquad \forall n,
\label{overall_problem_CM}
\\
&
|\phi_{\mathrm{arr}}^q|
\leq
\phi_{\max},
\qquad q\in\{s,c\},
\label{overall_problem_array}
\\
&
|\varphi_n^q|
\leq
\varphi_{\max},
\qquad
\forall n,\quad q\in\{s,c\}.
\label{overall_problem_element}
\end{align}
\end{subequations}
Here, constraint \eqref{overall_problem_CM} enforces the CM constraint, while \eqref{overall_problem_array} and \eqref{overall_problem_element} specify the feasible array-level and element-level rotations, respectively.

Problem \eqref{overall_problem} is challenging due to the dependence
of $\Theta_2$ on the sensing configuration and the MLE output, as
well as the nested worst-case objective and non-convex constraints.
To address this challenge, we decompose the problem into two sequential subproblems.

First, the sensing-stage RA configuration is optimized to minimize
the CRB, which implicitly improves the worst-case secrecy rate by reducing estimation uncertainty. According to \eqref{CRB}, this is
equivalent to maximizing the minimum effective angular information,
which yields
\begin{subequations}
\begin{align}
\max_{\Phi_s}\quad
&
\min_{\theta\in\Theta_1}
\Psi(\theta;\Phi_s)
\\
\mathrm{s.t.}\quad
&
|\phi_{\mathrm{arr}}^s|
\leq
\phi_{\max},
\\
&
|\varphi_n^s|
\leq
\varphi_{\max},
\qquad \forall n.
\end{align}
\label{sensing_problem}
\end{subequations}

After obtaining $\Phi_s^\star$, the BS estimates
$\hat{\theta}_e$ according to \eqref{MLE} and constructs the
corresponding uncertainty region $\Theta_2$. The communication-stage
RA configuration and analog beamformer are then optimized by solving the following problem:
\begin{subequations}
\label{communication_problem}
\begin{align}
\max_{\Phi_c,\mathbf w}\quad
&
\min_{\substack{
k\in\mathcal K,
\theta_e\in\Theta_2}}
\left[
R_k(\mathbf w,\Phi_c)
-
R_e(\theta_e;\mathbf w,\Phi_c)
\right]^+
\label{communication_problem_objective}
\\
\mathrm{s.t.}\quad
&
|\phi_{\mathrm{arr}}^c|
\leq
\phi_{\max},
\label{communication_problem_array}
\\
&
|\varphi_n^c|
\leq
\varphi_{\max},
\qquad \forall n,
\label{communication_problem_element}\\
& \eqref{overall_problem_CM}. \notag
\end{align}
\end{subequations}

The solutions to these two subproblems are detailed in the following sections. 
This scheme further demonstrates the functional versatility of RAs by adaptively adjusting the array and element rotation angles to prioritize sensing accuracy in the first stage and communication security in the second stage.

\section{Optimization of Sensing-Stage RA Configuration}
\label{sec:sensing_optimization}

Problem~\eqref{sensing_problem} is a non-convex semi-infinite
program (SIP), since a finite number of RA rotation variables are
optimized against infinitely many directions in the continuous
region $\Theta_1$. Moreover, the minimum operator renders the
objective non-smooth. To address this issue, we
introduce an auxiliary variable $t$, which represents a lower
bound on the effective angular information over $\Theta_1$.
Problem~\eqref{sensing_problem} is then equivalently reformulated as follows:
\begin{subequations}
\begin{align}
\max_{\Phi_s,t}\quad
& t
\\
\mathrm{s.t.}\quad
& t\leq\Psi(\theta;\Phi_s),
\qquad \forall\theta\in\Theta_1,
\label{sensing_hypograph_information}
\\
& \left|\phi_{\mathrm{arr}}^s\right|
\leq\phi_{\max},
\label{sensing_hypograph_array}
\\
& \left|\varphi_n^s\right|
\leq\varphi_{\max},
\qquad \forall n.
\label{sensing_hypograph_element}
\end{align}
\label{sensing_hypograph_problem}
\end{subequations}

To handle the continuously indexed constraints, we
discretize $\Theta_1$ into the angular set
\begin{equation}
\mathcal T
=
\left\{
\theta_{m_1}
=
\theta_{\mathrm L}
+
m_1\Delta_{\theta_1}
\ \middle|\
m_1=0,\ldots,M_{1}
\right\},
\label{sensing_angular_traversal_set}
\end{equation}
where
$\Delta_{\theta_1}
=
\frac{\theta_{\mathrm U}-\theta_{\mathrm L}}{M_{1}}$. By replacing the continuously indexed constraints in
\eqref{sensing_hypograph_problem} with those evaluated over
$\mathcal T$, we obtain the finite-dimensional approximation as follows:
\begin{subequations}
\label{sensing_discretized_problem}
\begin{align}
\max_{\Phi_s,t}\quad
&
t
\\
\mathrm{s.t.}\quad
&
t\leq\Psi(\theta_{m_1};\Phi_s),
\qquad
\forall\theta_{m_1}\in\mathcal T,
\label{sensing_discretized_information}
\\
&
\left|\phi_{\mathrm{arr}}^s\right|
\leq\phi_{\max},
\\
&
\left|\varphi_n^s\right|
\leq\varphi_{\max},
\qquad \forall n.
\end{align}
\end{subequations}

For fixed $\Phi_s$, the optimal auxiliary variable in
\eqref{sensing_discretized_problem} is
\begin{equation}
t
=
\mathcal J(\Phi_s)
\triangleq
\min_{\theta_{m_1}\in\mathcal T}
\Psi(\theta_{m_1};\Phi_s).
\label{sensing_discretized_objective_function}
\end{equation}
Hence, problem~\eqref{sensing_discretized_problem} is equivalent
to maximizing $\mathcal J(\Phi_s)$ over the RA rotation
variables. Since the array-level and element-level rotations are
nonlinearly coupled through $\mathbf b(\theta_{m_1};\Phi_s)$ and
$\dot{\mathbf b}(\theta_{m_1};\Phi_s)$, we employ a cyclic coordinate
search method. At each coordinate update, the method exhaustively
evaluates the candidate values of one rotation variable while
holding all remaining rotation variables fixed.

Define the candidate sets for the array-level and element-level
rotation angles as follows:
\begin{align}
\mathcal G_{\mathrm{arr}}
=
\left\{
-\phi_{\max}
+
\ell\Delta_{\mathrm{arr}}
\ \middle|\
\ell=0,\ldots,L_{\mathrm{arr}}
\right\},
\label{sensing_array_candidate_set}\\
\mathcal G_{\mathrm{ele}}
=
\left\{
-\varphi_{\max}
+
\ell\Delta_{\mathrm{ele}}
\ \middle|\
\ell=0,\ldots,L_{\mathrm{ele}}
\right\},
\label{sensing_element_candidate_set}
\end{align}
respectively, where
$
\Delta_{\mathrm{arr}}
=
\frac{2\phi_{\max}}{L_{\mathrm{arr}}}
$, $
\Delta_{\mathrm{ele}}
=
\frac{2\varphi_{\max}}{L_{\mathrm{ele}}}.
$

Let $\Phi_s^{(r)}$
denote the sensing-stage RA configuration at the $r$-th coordinate
search cycle.
With $\boldsymbol{\varphi}^{s,(r)}$ fixed, the array-level rotation
is updated as follows:
\begin{equation}
\phi_{\mathrm{arr}}^{s,(r+1)}
\in
\arg\max_{\phi\in\mathcal G_{\mathrm{arr}}}
\mathcal J
\left(
\phi,\boldsymbol{\varphi}^{s,(r)}
\right).
\label{sensing_array_coordinate_update}
\end{equation}
For every candidate $\phi$, the objective in
\eqref{sensing_array_coordinate_update} is evaluated as follows:
\begin{equation}
\mathcal J
\left(
\phi,\boldsymbol{\varphi}^{s,(r)}
\right)
=
\min_{\theta_{m_1}\in\mathcal T}
\Psi
\left(
\theta_{m_1};
\phi,\boldsymbol{\varphi}^{s,(r)}
\right).
\end{equation}
After updating the array-level rotation, the element-level
rotations are optimized sequentially. To update the $n$-th
element, define the candidate element-rotation vector as follows:
\begin{equation}
\boldsymbol{\varphi}_{n}^{s,(r)}(\xi)
=[
\varphi_1^{s,(r+1)},\ldots,
\varphi_{n-1}^{s,(r+1)},\xi,
\varphi_{n+1}^{s,(r)},\ldots,
\varphi_{N_t}^{s,(r)}
]^T.
\label{sensing_partial_element_vector}
\end{equation}
The rotation angle of the $n$-th element is updated as follows:
\begin{equation}
\varphi_n^{s,(r+1)}
\in
\arg\max_{\xi\in\mathcal G_{\mathrm{ele}}}
\mathcal J
\left(
\phi_{\mathrm{arr}}^{s,(r+1)},
\boldsymbol{\varphi}_{n}^{s,(r)}(\xi)
\right).
\label{sensing_element_coordinate_update}
\end{equation}
Each candidate element rotation is evaluated according to
its worst sensing performance over the entire discretized angular
set $\mathcal T$. Since the current value of each rotation
variable is included in its candidate set, every coordinate
update preserves or improves the objective value. It follows that
\begin{equation}
\mathcal J\left(\Phi_s^{(r+1)}\right)
\geq
\mathcal J\left(\Phi_s^{(r)}\right).
\label{sensing_coordinate_monotonicity}
\end{equation}
The coordinate-search iterations terminate when
\begin{equation}
\frac{
\mathcal J\left(\Phi_s^{(r+1)}\right)
-
\mathcal J\left(\Phi_s^{(r)}\right)
}{
\max\left\{
1,
\left|
\mathcal J\left(\Phi_s^{(r)}\right)
\right|
\right\}
}
\leq
\epsilon_{\mathrm c},
\label{sensing_coordinate_stopping}
\end{equation}
where $\epsilon_{\mathrm c}>0$ denotes the convergence tolerance.
After convergence, the optimized RA configuration
and its corresponding worst-case effective angular information
over $\mathcal T$ are given by
\begin{equation}
\Phi_s^\star
=
\Phi_s^{(r+1)},
t^\star
=
\mathcal J\left(\Phi_s^\star\right).
\end{equation}

The overall sensing-stage optimization procedure is summarized in
Algorithm~\ref{alg:sensing_coordinate_search}.
\begin{algorithm}[t]
\caption{Cyclic Coordinate Search for Sensing-Stage RA Design}
\label{alg:sensing_coordinate_search}
\begin{algorithmic}[1]
\State \textbf{Input:}
$\Theta_1$,
$M_1$,
$L_{\mathrm{arr}}$,
$L_{\mathrm{ele}}$,
$r=0$,
and $\epsilon_{\mathrm c}$.
\State Construct
$\mathcal T$,
$\mathcal G_{\mathrm{arr}}$, and
$\mathcal G_{\mathrm{ele}}$
via
\eqref{sensing_angular_traversal_set},
\eqref{sensing_array_candidate_set}, and
\eqref{sensing_element_candidate_set}.
\State Select an initial configuration
$\Phi_s^{(0)}
\in
\mathcal G_{\mathrm{arr}}
\times
\mathcal G_{\mathrm{ele}}^{N_t}$.
\Repeat
    \State Compute
    $J_{\mathrm{old}}
    \leftarrow
    \mathcal J(\Phi_s^{(r)})$.
    \State Update
    $\phi_{\mathrm{arr}}^{s,(r+1)}$
    according to
    \eqref{sensing_array_coordinate_update}.
    \For{$n=1,\ldots,N_t$}
        \State Update
        $\varphi_n^{s,(r+1)}$
        according to
        \eqref{sensing_element_coordinate_update}.
    \EndFor
    \State Set
    $\Phi_s^{(r+1)}
    \leftarrow
    \left(
    \phi_{\mathrm{arr}}^{s,(r+1)},
    \boldsymbol{\varphi}^{s,(r+1)}
    \right)$.
    \State Compute
    $J_{\mathrm{new}}
    \leftarrow
    \mathcal J(\Phi_s^{(r+1)})$.
    \State Set $r\leftarrow r+1$.
\Until{
    $(J_{\mathrm{new}}-J_{\mathrm{old}})
    /\max\{1,|J_{\mathrm{old}}|\}
    \leq\epsilon_{\mathrm c}$
}
\State \textbf{Output:}
$\Phi_s^\star=\Phi_s^{(r)}$,
$t^\star=\mathcal J(\Phi_s^\star)$.
\end{algorithmic}
\end{algorithm}
The angular and rotation resolutions
determine the approximation accuracy of the obtained solution.

\section{Optimization of Communication-Stage RA Configuration and Beamforming}
Since the angular uncertainty region $\Theta_2$ determines the set
of possible eavesdropper channels, jointly designing the analog
beamformer $\mathbf w$ and the communication-stage RA configuration
$\Phi_c$ provides additional spatial DoFs to improve the secrecy rate over $\Theta_2$. 
In this section, we consider the robust design for the secure communication stage, aiming to maximize the worst-case secrecy rate given the uncertainty region established in the sensing stage. 
The goal is to solve
problem~\eqref{communication_problem} by jointly optimizing
$\mathbf w$, $\phi_{\mathrm{arr}}^c$, and
$\boldsymbol{\varphi}^c$. Unlike conventional
robust beamforming with an uncertainty region specified
\emph{a priori}, $\Theta_2$ is constructed from the MLE
and the corresponding CRB under the optimized
sensing-stage RA configuration $\Phi_s^\star$. Therefore, the
communication-stage robust design explicitly adapts to the sensing
accuracy achieved.

\subsection{Problem Reformulation}
\label{subsec:communication_reformulation}
To facilitate the subsequent optimization, given the estimate
$\hat{\theta}_e$ and the corresponding uncertainty region $\Theta_2$,
we first rewrite problem~\eqref{communication_problem} as follows:
\begin{subequations}
\label{communication_untruncated_problem}
\begin{align}
\max_{\mathbf w,\Phi_c}\quad
&
\min_{k\in\mathcal K}
R_k(\mathbf w,\Phi_c)
-
\max_{\theta\in\Theta_2}
R_e(\theta;\mathbf w,\Phi_c).
\label{communication_untruncated_objective}
\\
\mathrm{s.t.}\quad
&
\eqref{overall_problem_CM},
\eqref{communication_problem_array},
\eqref{communication_problem_element}.
\notag
\end{align}
\end{subequations}
Note that the operator $[\cdot]^+$ can be omitted without loss of generality, as a feasible solution always ensures a non-negative secrecy rate \cite{Xiong2026}.
The maximization over the continuous region $\Theta_2$ makes
\eqref{communication_untruncated_problem} non-smooth when the worst-case direction changes. To obtain a
finite-dimensional formulation, let
$\Theta_2=[\xi_{\mathrm L},\xi_{\mathrm U}]$ and discretize it into $M$ uniformly sampled directions:
\begin{equation}
\Theta
=
\left\{
\theta_{m}
=
\xi_{\mathrm L}
+
(m-1)\Delta_{\theta}
\ \middle|\
{m}=1,\ldots,M
\right\},
\end{equation}
where $\Delta_{\theta} =\frac{\xi_{\mathrm U}-\xi_{\mathrm L}}{M-1}$.
For each sampled direction $\theta_{m}$, define the corresponding
eavesdropper rate as follows:
\begin{equation}
R_{e,m}(\mathbf w,\Phi_c)
\triangleq
R_e(\theta_m;\mathbf w,\Phi_c),
\qquad m=1,\ldots,M.
\label{communication_indexed_eavesdropper_rate}
\end{equation}

Accordingly, problem~\eqref{communication_untruncated_problem} can be transformed to the following:
\begin{subequations}
\label{communication_discretized_problem}
\begin{align}
\max_{\mathbf w,\Phi_c}\quad
&
\min_{\substack{
k=1,\ldots,K\\
m=1,\ldots,M
}}
\tilde R_{k,m}(\mathbf w,\Phi_c)
\label{communication_discretized_objective}
\\
\mathrm{s.t.}\quad
&
\eqref{overall_problem_CM},
\eqref{communication_problem_array},
\eqref{communication_problem_element},
\notag
\end{align}
\end{subequations}
where $
\tilde R_{k,m}(\mathbf w,\Phi_c)
\triangleq
R_k(\mathbf w,\Phi_c)
-
R_{e,m}(\mathbf w,\Phi_c)
$.
Although problem \eqref{communication_discretized_problem} is more tractable than \eqref{communication_problem}, it remains non-smooth due to the \(\min\) operator over the users and sampled directions. We therefore further approximate its objective by a smooth function based on the following lemma \cite{Boyd2004}.

\begin{lemma}\label{lemma_softmin}
For any \(c_1,c_2,\ldots,c_K\in\mathbb{R}\) and any \(\beta>0\), it holds that
\begin{equation}\label{lemma_softmin_eq}
\min_{k=1,\ldots,K} c_k-\frac{1}{\beta}\log K
\le
- \frac{1}{\beta}\log \sum_{k=1}^{K}\exp\!\left(-\beta c_k\right)
\le
\min_{k=1,\ldots,K} c_k.
\end{equation}
Moreover, the above bounds become tight as \(\beta\to \infty \).
\end{lemma}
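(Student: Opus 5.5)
The plan is to sandwich the sum $S\triangleq\sum_{k=1}^{K}\exp(-\beta c_k)$ between its largest term and $K$ times its largest term. Then apply the monotone map $x\mapsto -\frac{1}{\beta}\log x$.

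Let $c_{\min}\triangleq\min_{k} c_k$, attained at some index $k^\star$. Since $\beta>0$, the map $c\mapsto\exp(-\beta c)$ is strictly decreasing. The largest summand is therefore $\exp(-\beta c_{\min})$, and every summand lies in $(0,\exp(-\beta c_{\min})]$. Because all terms are positive, dropping every index except $k^\star$ gives the lower bound $S\ge\exp(-\beta c_{\min})$. Bounding each of the $K$ terms by the maximal one gives the upper bound $S\le K\exp(-\beta c_{\min})$.

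Next, apply $x\mapsto-\frac{1}{\beta}\log x$, which is strictly decreasing on $(0,\infty)$ for $\beta>0$; this reverses both inequalities. The bound $S\ge \exp(-\beta c_{\min})$ yields
\[
-\tfrac{1}{\beta}\log S\le c_{\min},
\]
which is the right-hand inequality. The bound $S\le K\exp(-\beta c_{\min})$ yields
\[
-\tfrac{1}{\beta}\log S\ge -\tfrac{1}{\beta}\log K + c_{\min},
\]
which is the left-hand inequality.

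For tightness, the gap between the two outer bounds in \eqref{lemma_softmin_eq} is exactly $\frac{1}{\beta}\log K$. For fixed $K$ this tends to $0$ as $\beta\to\infty$, so the middle quantity converges to $\min_k c_k$ by squeezing.

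There is no real obstacle here. The only step requiring care is keeping track of the inequality reversals caused by the decreasing maps $\exp(-\beta\,\cdot)$ and $-\frac{1}{\beta}\log(\cdot)$, both of which rely on $\beta>0$.
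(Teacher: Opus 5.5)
Your proof is correct and follows essentially the same route as the paper's proof in Appendix C. Both sandwich the sum between $\exp(-\beta c_{\min})$ and $K\exp(-\beta c_{\min})$, apply the decreasing map $x\mapsto -\frac{1}{\beta}\log x$, and note that the gap $\frac{1}{\beta}\log K$ vanishes as $\beta\to\infty$.
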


\begin{proof}
See Appendix C.
\end{proof}

Using Lemma~\ref{lemma_softmin}, problem \eqref{communication_discretized_problem} can be smoothly approximated as follows:
\begin{subequations}
\label{P3}
\begin{align}
\max_{\mathbf w,\Phi_c}\quad
&
-\frac{1}{\beta_{c}}
\log
\sum_{k=1}^{K}
\sum_{m=1}^{M}
\exp\!\left(
-\beta_{c}\tilde R_{k,m}(\mathbf w,\Phi_c)
\right)
\label{P3_a}
\\
\mathrm{s.t.}\quad
&
\eqref{overall_problem_CM},
\eqref{communication_problem_array},
\eqref{communication_problem_element}.
\notag
\end{align}
\end{subequations}
where \(\beta_{c} > 0\) is the smoothing parameter. The objective function in \eqref{P3_a} is denoted by \(\mathcal G(\mathbf w,\Phi_c)\) for simplicity.
It follows from Lemma~\ref{lemma_softmin} that
\(\mathcal G\) is a smooth lower approximation of
\eqref{communication_discretized_objective}, with an approximation
gap bounded by \(\log(KM)/\beta_c\). This approximation becomes
exact as \(\beta_c\to\infty\).

Although problem~\eqref{P3} has a smooth objective, it remains
challenging due to the non-convex CM constraint
\eqref{overall_problem_CM} and the bounded-rotation constraints
\eqref{communication_problem_array} and
\eqref{communication_problem_element}. We observe that the CM
constraint can be naturally characterized by a complex circle
manifold, while the bounded-rotation constraints can be handled
through a projection method.
Based on this structure, we develop a product-space joint optimization (PSJO) framework to solve the reformulated
problem~\eqref{P3}.
\subsection{The Proposed PSJO Framework}
In this subsection, we develop the PSJO framework for solving
problem~\eqref{P3}. Specifically, the normalized analog beamformer
and RA rotation variables are represented over a product space
comprising a complex circle manifold and a box-constrained set, respectively.
We then derive an efficient conjugate gradient ascent (CGA) algorithm over this product space to jointly update both variable blocks.

\textit{1) Analog Beamforming Gradient:}
We first derive the analog-beamforming component of the joint
ascent direction. By introducing the auxiliary variable
\(\mathbf v=\sqrt{N_t}\mathbf w\), the CM constraint
\eqref{overall_problem_CM} is equivalently rewritten as follows:
\begin{equation}
|v_n|=1,\qquad \forall n=1,\ldots,N_t.
\label{normalized_CM_constraint}
\end{equation}
Accordingly, the feasible set of \(\mathbf v\) can be characterized
by the following complex circle manifold:
\begin{equation}
\mathcal M
=
\left\{
\mathbf v\in\mathbb C^{N_t\times 1}
\;\middle|\;
|v_n|=1,\ \forall n
\right\}.
\label{communication_complex_circle_manifold}
\end{equation}

The tangent space
of \(\mathcal M\) at \(\tilde{\mathbf v}\in\mathcal M\) is
given by
\begin{equation}
\mathcal T_{\tilde{\mathbf v}}\mathcal M
=
\left\{
\boldsymbol\xi_v\in\mathbb C^{N_t\times 1}
\;\middle|\;
\operatorname{Re}
\left\{
\boldsymbol\xi_v\odot\tilde{\mathbf v}^{*}
\right\}
=
\mathbf 0
\right\},
\label{communication_tangent_space}
\end{equation}
where \(\boldsymbol\xi_v\) denotes a tangent vector at
\(\tilde{\mathbf v}\).

With a slight abuse of notation, we use
\(\mathcal G(\mathbf v,\Phi_c)\) to represent
\(\mathcal G(\mathbf v/\sqrt{N_t},\Phi_c)\).
To obtain the Riemannian gradient, we first derive the Euclidean
gradient of \(\mathcal G(\mathbf v,\Phi_c)\). 
Define
\begin{equation}
a_k
=
\frac{\exp(-\beta_cR_k)}
{\sum_{i=1}^{K}\exp(-\beta_cR_i)},
b_m
=
\frac{\exp(\beta_cR_{e,m})}
{\sum_{q=1}^{M}\exp(\beta_cR_{e,q})}.
\label{communication_smoothing_weights}
\end{equation}
The Euclidean gradient of the objective function $\nabla_{\mathbf v}\mathcal G$ can be given by \eqref{grad_w}.
\begin{figure*}[!b]
\hrulefill
\begin{equation}
\begin{aligned}
\nabla_{\mathbf v}\mathcal G
={}&
\frac{2\gamma_c}{N_t\ln 2}
\sum_{k=1}^{K}
a_k
\frac{
\mathbf h_k\mathbf h_k^H\mathbf v
}{
1+\frac{\gamma_c}{N_t}
\left|\mathbf h_k^H\mathbf v\right|^2
}
-
\frac{2\gamma_e}{N_t\ln 2}
\sum_{m=1}^{M}
b_m
\frac{
\mathbf h_{e,m}\mathbf h_{e,m}^H\mathbf v
}{
1+\frac{\gamma_e}{N_t}
\left|\mathbf h_{e,m}^H\mathbf v\right|^2
}.
\end{aligned}
\label{grad_w}
\end{equation}
\end{figure*}
The Riemannian gradient is then obtained by projecting
\(\nabla_{\mathbf v}\mathcal G\) onto
\(\mathcal T_{\tilde{\mathbf v}}\mathcal M\), yielding
\begin{equation}
\begin{aligned}
\operatorname{grad}_{\mathbf v}
&\mathcal G(\tilde{\mathbf v},\Phi_c)
=\operatorname{Proj}_{\tilde{\mathbf v}}
(
\nabla_{\mathbf v}
\mathcal G(\tilde{\mathbf v},\Phi_c)
)
\\
={}&
\nabla_{\mathbf v}
\mathcal G(\tilde{\mathbf v},\Phi_c)
-
\operatorname{Re}
\{
\nabla_{\mathbf v}
\mathcal G(\tilde{\mathbf v},\Phi_c)
\odot\tilde{\mathbf v}^{*}
\}
\odot\tilde{\mathbf v}.
\end{aligned}
\label{Riemannian_grad_w}
\end{equation}

\textit{2) RA Rotation Gradient and Box Projection:}
We next derive the RA-rotation component of the joint ascent
direction. For notational convenience, define
\begin{equation}
\varsigma_k
\triangleq
\mathbf h_k^H\mathbf v,
\qquad
\varsigma_{e,m}
\triangleq
\mathbf h_{e,m}^H\mathbf v,
\label{communication_channel_response}
\end{equation}
and
\begin{equation}
\dot{\mathbf f}_n
\triangleq
\frac{\partial\mathbf f_n}
{\partial\varphi_n^c}
=
\begin{bmatrix}
\cos\varphi_n^c&
-\sin\varphi_n^c
\end{bmatrix}^{T}.
\label{communication_boresight_derivative}
\end{equation}
To characterize the dependence of the directional gain on the
array-level rotation, define
\begin{equation}
\begin{aligned}
\zeta_{q,n}
&\triangleq
\frac{
\partial
\bigl(
\bar{\mathbf f}_n^T\mathbf u_{q,n}
\bigr)
}{
\partial\phi_{\mathrm{arr}}^c
}
=
\mathbf f_n^T
\dot{\mathbf R}^{T}
\bigl(\phi_{\mathrm{arr}}^c\bigr)
\mathbf u_{q,n}
\\
&\quad-
\frac{1}{r_{q,n}}
\bar{\mathbf f}_n^T
\bigl(
\mathbf I_2-\mathbf u_{q,n}\mathbf u_{q,n}^T
\bigr)
\dot{\mathbf R}
\bigl(\phi_{\mathrm{arr}}^c\bigr)
\bar{\mathbf c}_n ,
\end{aligned}
\label{communication_zeta_definition}
\end{equation}
where
\(\dot{\mathbf R}(\phi_{\mathrm{arr}}^c)
\triangleq
\partial\mathbf R(\phi_{\mathrm{arr}}^c)/
\partial\phi_{\mathrm{arr}}^c\).
Collecting the array-level and element-level components, the
gradient of $\mathcal G$ with respect to the RA configuration is
expressed as follows:
\begin{equation}
\nabla_{\Phi_c}\mathcal G
={}
\begin{bmatrix}
\displaystyle
\frac{\partial\mathcal G}
{\partial\phi_{\mathrm{arr}}^c}
&
\displaystyle
\frac{\partial\mathcal G}
{\partial\varphi_1^c}
&
\cdots
&
\displaystyle
\frac{\partial\mathcal G}
{\partial\varphi_{N_t}^c}
\end{bmatrix}^{T}.\label{grad_RA_rotation}
\end{equation}
The required channel-response derivatives and the corresponding
objective gradients are given in
\eqref{channel_array_rotation_derivative}--\eqref{grad_element_rotation}.
\begin{figure*}[!b]
\hrulefill
\begin{align}
\frac{\partial[\mathbf h_q]_n}
{\partial\phi_{\mathrm{arr}}^c}
={}&
\beta_q\sqrt{G_0}
\left[
\mathbf a_t(\tilde{\psi}_q)
\right]_n
\Bigg[
-j\pi m_n^{\mathrm t}
\cos(\tilde{\psi}_q)
\left(
\bar{\mathbf f}_n^T\mathbf u_{q,n}
\right)^p
+
p
\left(
\bar{\mathbf f}_n^T\mathbf u_{q,n}
\right)^{p-1}
\zeta_{q,n}
\Bigg],
\qquad
q\in\{k,(e,m)\},\label{channel_array_rotation_derivative}
\\
\frac{\partial\mathcal G}
{\partial\phi_{\mathrm{arr}}^c}
={}&
\frac{2\gamma_c}{N_t\ln 2}
\sum_{k=1}^{K}
a_k
\frac{
\operatorname{Re}
\left\{
\varsigma_k^*
\left(
\frac{\partial\mathbf h_k^H}
{\partial\phi_{\mathrm{arr}}^c}
\mathbf v
\right)
\right\}
}{
1+\frac{\gamma_c}{N_t}
|\varsigma_k|^2
}
-
\frac{2\gamma_e}{N_t\ln 2}
\sum_{m=1}^{M}
b_m
\frac{
\operatorname{Re}
\left\{
\varsigma_{e,m}^*
\left(
\frac{\partial\mathbf h_{e,m}^H}
{\partial\phi_{\mathrm{arr}}^c}
\mathbf v
\right)
\right\}
}{
1+\frac{\gamma_e}{N_t}
|\varsigma_{e,m}|^2
},\label{grad_array_rotation}
\\
\frac{\partial\varsigma_q}
{\partial\varphi_n^c}
={}&
\beta_q^*
\left[
\mathbf a_t(\tilde{\psi}_q)
\right]_n^*
\sqrt{G_0}\,p
\left(
\bar{\mathbf f}_n^T\mathbf u_{q,n}
\right)^{p-1}
\left(
\dot{\mathbf f}_n^T
\mathbf R^T(\phi_{\mathrm{arr}}^c)
\mathbf u_{q,n}
\right)
v_n,
\qquad
q\in\{k,(e,m)\},
\\
\frac{\partial\mathcal G}
{\partial\varphi_n^c}
={}&
\frac{2\gamma_c}{N_t\ln 2}
\sum_{k=1}^{K}
a_k
\frac{
\operatorname{Re}
\left\{
\varsigma_k^*
\frac{\partial\varsigma_k}
{\partial\varphi_n^c}
\right\}
}{
1+\frac{\gamma_c}{N_t}
|\varsigma_k|^2
}
-
\frac{2\gamma_e}{N_t\ln 2}
\sum_{m=1}^{M}
b_m
\frac{
\operatorname{Re}
\left\{
\varsigma_{e,m}^*
\frac{\partial\varsigma_{e,m}}
{\partial\varphi_n^c}
\right\}
}{
1+\frac{\gamma_e}{N_t}
|\varsigma_{e,m}|^2
},
\qquad
n=1,\ldots,N_t,\label{grad_element_rotation}
\end{align}
\end{figure*}
The derivatives in \eqref{channel_array_rotation_derivative} and
\eqref{grad_element_rotation} are evaluated on the active branch
$\bar{\mathbf f}_n^T\mathbf u_{q,n}>0$ of the truncated directional
pattern and become zero on the inactive branch.

\textit{3) Joint Product-Space Optimization:}
To jointly update the analog beamformer and the RA configuration,
we first construct a product space that accommodates their distinct
geometric structures. Since the array-level and element-level
rotations have different feasible ranges, the RA configuration is
normalized as follows:
\begin{equation}
\widetilde{\Phi}_c
\triangleq
\mathbf D_c^{-1}\Phi_c
=
\begin{bmatrix}
\displaystyle
\frac{\phi_{\mathrm{arr}}^c}{\phi_{\max}}
&
\displaystyle
\frac{\varphi_1^c}{\varphi_{\max}}
&
\cdots
&
\displaystyle
\frac{\varphi_{N_t}^c}{\varphi_{\max}}
\end{bmatrix}^{T},
\label{normalized_RA_configuration}
\end{equation}
where
$\mathbf D_c
\triangleq
\operatorname{diag}
\left(
\phi_{\max},
\varphi_{\max}\mathbf I_{N_t}
\right)$.
Accordingly, the bounded-rotation constraints are equivalently
represented by
\begin{equation}
\widetilde{\Phi}_c\in\mathcal B,
\qquad
\mathcal B=[-1,1]^{N_t+1},
\label{normalized_rotation_set}
\end{equation}
and the physical RA configuration is recovered as
$\Phi_c=\mathbf D_c\widetilde{\Phi}_c$. By the chain rule, the
gradient with respect to the normalized RA configuration is
\begin{equation}
\nabla_{\widetilde{\Phi}_c}\mathcal G
=
\mathbf D_c\nabla_{\Phi_c}\mathcal G.
\label{normalized_rotation_gradient}
\end{equation}

Combining the complex circle manifold $\mathcal M$ in
\eqref{communication_complex_circle_manifold} and the normalized
rotation set $\mathcal B$, the feasible product space is constructed
as follows:
\begin{equation}
\mathcal P
=
\mathcal M\times\mathcal B
=
\{
(\mathbf v,\widetilde{\Phi}_c)
\;|\;
|v_n|=1,\ \forall n,\ 
\widetilde{\Phi}_c\in\mathcal B
\}.
\label{communication_product_space}
\end{equation}
At $(\mathbf v,\widetilde{\Phi}_c)\in\mathcal P$, the local
product search space is given by
\begin{equation}
\mathcal S_{(\mathbf v,\widetilde{\Phi}_c)}
=
\mathcal T_{\mathbf v}\mathcal M
\times
\mathbb R^{N_t+1}.
\label{local_product_search_space}
\end{equation}
% {\color{red}The first component accounts for the geometry induced by the CM
% constraint, whereas feasibility of the second component is enforced
% through projection onto $\mathcal B$.}

Based on \eqref{Riemannian_grad_w} and
\eqref{normalized_rotation_gradient}, we develop a CGA algorithm
over $\mathcal P$. Given an initial feasible point
$(\mathbf v^{(0)},\widetilde{\Phi}_c^{(0)})$, each iteration
consists of three steps: 1) calculation of the joint product-space
gradient, 2) update of the feasible solution, and 3) determination
of the step size using Armijo backtracking.

\textit{a) Joint product-space gradient:}
At the $r$-th iteration, the projected gradient for the normalized RA configuration is defined as follows:
\begin{equation}
\mathbf g_{\Phi,r}
\triangleq
\operatorname{Proj}_{\mathcal B}
(
\widetilde{\Phi}_c^{(r)}
+
\nabla_{\widetilde{\Phi}_c}\mathcal G^{(r)}
)
-
\widetilde{\Phi}_c^{(r)},
\label{rotation_projected_gradient}
\end{equation}
where $\operatorname{Proj}_{\mathcal B}(\cdot)$ denotes the
Euclidean projection onto $\mathcal B$. Thus, the joint product-space
gradient is given by
\begin{equation}
\mathbf g^{(r)}
=
[\mathbf g_{v,r};
\mathbf g_{\Phi,r}],
\label{joint_product_gradient}
\end{equation}
where $\mathbf g_{v,r}
\triangleq
\operatorname{grad}_{\mathbf v}
\mathcal G
(
\mathbf v^{(r)},\Phi_c^{(r)}
)$ is the Riemannian gradient on $\mathcal M$.

\textit{b) Update of the feasible solution:}
The current product-space gradient can be directly used
as the ascent direction. However, utilizing this ascent direction 
exploits the local ascent direction and discards the
search information accumulated in previous iterations. 
To address this issue, we adopt a conjugate ascent direction, which combines the
current gradient with the transported direction from the previous
iteration. Specifically, the joint conjugate ascent direction is
given by
\begin{equation}
\begin{aligned}
\mathbf d^{(r)}
&=[
\mathbf d_v^{(r)};
\mathbf d_{\Phi}^{(r)}]\\
&=
\begin{bmatrix}
\mathbf g_{v,r}
+
\sigma_r
\operatorname{Tran}_{\mathbf v^{(r)}
\leftarrow\mathbf v^{(r-1)}}
(
\mathbf d_v^{(r-1)}
);
\\
\mathbf g_{\Phi,r}
+
\sigma_r
\operatorname{Tran}_{\widetilde{\Phi}_c^{(r)}
\leftarrow\widetilde{\Phi}_c^{(r-1)}}
(
\mathbf d_{\Phi}^{(r-1)}
)
\end{bmatrix},
\end{aligned}
\label{joint_conjugate_direction}
\end{equation}
where $\sigma_r$ is the conjugate parameter which is defined in
\eqref{product_conjugate_parameter};
\begin{figure*}[!b]
\hrulefill
\begin{align}
\sigma_r
={}&
\left[
\frac{
\operatorname{Re}
\left\{
\mathbf g_{v,r}^{H}
\left[
\mathbf g_{v,r}
-
\left(
\mathbf g_{v,r-1}
-
\operatorname{Re}
\{
\mathbf g_{v,r-1}
\odot
(\mathbf v^{(r)})^*
\}
\odot
\mathbf v^{(r)}
\right)
\right]
\right\}
+
\mathbf g_{\Phi,r}^{T}
(
\mathbf g_{\Phi,r}
-
\mathbf g_{\Phi,r-1}
)
}{
\begin{gathered}
-
\operatorname{Re}
\left\{
\left[
\mathbf d_v^{(r-1)}
-
\operatorname{Re}
\{
\mathbf d_v^{(r-1)}
\odot
(\mathbf v^{(r)})^*
\}
\odot
\mathbf v^{(r)}
\right]^{H}
\left[
\mathbf g_{v,r}
-
\left(
\mathbf g_{v,r-1}
-
\operatorname{Re}
\{
\mathbf g_{v,r-1}
\odot
(\mathbf v^{(r)})^*
\}
\odot
\mathbf v^{(r)}
\right)
\right]
\right\}
\\[-1mm]
-
(\mathbf d_{\Phi}^{(r-1)})^T
(
\mathbf g_{\Phi,r}
-
\mathbf g_{\Phi,r-1}
)
\end{gathered}
}
\right]^+
\label{product_conjugate_parameter}
\end{align}
\end{figure*}
$\operatorname{Tran}_{\mathbf v^{(r)}
\leftarrow\mathbf v^{(r-1)}}(\cdot)$ and
$\operatorname{Tran}_{\widetilde{\Phi}_c^{(r)}
\leftarrow\widetilde{\Phi}_c^{(r-1)}}(\cdot)$ transfer the previous search
directions to the current local product search space, respectively, as follows:
\begin{subequations}
\label{product_vector_transport}
\begin{align}
&
\begin{aligned}
&\operatorname{Tran}_{\mathbf v^{(r)}
\leftarrow\mathbf v^{(r-1)}}
\bigl(\mathbf d_v^{(r-1)}\bigr)
\\
&\qquad=
\mathbf d_v^{(r-1)}
-
\operatorname{Re}
\bigl\{
\mathbf d_v^{(r-1)}
\odot
(\mathbf v^{(r)})^*
\bigr\}
\odot
\mathbf v^{(r)}
\end{aligned}
\label{beamforming_vector_transport}
\\[1mm]
&
\operatorname{Tran}_{\widetilde{\Phi}_c^{(r)}
\leftarrow\widetilde{\Phi}_c^{(r-1)}}
\bigl(\mathbf d_{\Phi}^{(r-1)}\bigr)
=
\mathbf d_{\Phi}^{(r-1)}.
\label{rotation_vector_transport}
\end{align}
\end{subequations}

Given the conjugate ascent direction in
\eqref{joint_conjugate_direction}, the update on the
local product search space is
\begin{equation}
\begin{aligned}
\bigl[
\widehat{\mathbf v}^{(r)};
\widehat{\Phi}_c^{(r)}
\bigr]
&=
\bigl[
\mathbf v^{(r)};
\widetilde{\Phi}_c^{(r)}
\bigr]
+
\alpha_r
\bigl[
\mathbf d_v^{(r)};
\mathbf d_{\Phi}^{(r)}
\bigr]
\\
&=
\bigl[
\mathbf v^{(r)}
+
\alpha_r\mathbf d_v^{(r)};
\widetilde{\Phi}_c^{(r)}
+
\alpha_r\mathbf d_{\Phi}^{(r)}
\bigr],
\end{aligned}
\label{product_tentative_update}
\end{equation}
where $\alpha_r>0$ is the step size. Generally, the update in
\eqref{product_tentative_update} does not necessarily remain in the
feasible product space $\mathcal P$. In such cases, a retraction-projection
operation is applied to obtain the next feasible iterate as follows:
\begin{equation}
\begin{aligned}
[\mathbf v^{(r+1)};\widetilde{\Phi}_c^{(r+1)}]
&=
\bigl[
\operatorname{Ret}_{\mathcal M}
\bigl(\widehat{\mathbf v}^{(r)}\bigr);
\operatorname{Proj}_{\mathcal B}
\bigl(\widehat{\Phi}_c^{(r)}\bigr)
\bigr]
\\
&=
\left[
\widehat{\mathbf v}^{(r)}
\odot
\frac{1}{|\widehat{\mathbf v}^{(r)}|};
\operatorname{Proj}_{\mathcal B}
\bigl(\widehat{\Phi}_c^{(r)}\bigr)
\right],
\end{aligned}
\label{product_feasible_update}
\end{equation}
where $\operatorname{Ret}_{\mathcal M}(\cdot)$ and
$\operatorname{Proj}_{\mathcal B}(\cdot)$ denote the retraction onto
the complex circle manifold $\mathcal M$ and the Euclidean projection
onto the box-constrained set $\mathcal B$, respectively.

\textit{c) Armijo step-size update:}
Using the Armijo line search strategy, the step size
is adaptively determined by requiring
\begin{equation}
\begin{aligned}
\mathcal G
(
\mathbf v^{(r+1)},
\mathbf D_c\widetilde{\Phi}_c^{(r+1)}
)
\geq{}&
\mathcal G
(
\mathbf v^{(r)},
\mathbf D_c\widetilde{\Phi}_c^{(r)}
)
+
c_{\mathrm A}\tau^{q_r}\widehat{\alpha}
\widehat d_{\mathcal G}^{(r)},
\end{aligned}
\label{product_Armijo_condition}
\end{equation}
where $\tau\in(0,1)$ is the step size coefficient,
$\widehat{\alpha}>0$ is the initial trial step size,
$c_{\mathrm A}\in(0,1)$ is the sufficient-ascent parameter, and
$q_r$ is the number of line searches. The final step size for the $r$-th
iteration is given by 
$\alpha_r
=\tau^{q_r}\widehat{\alpha}$.
$\widehat d_{\mathcal G}^{(r)}$ represents the directional derivative, which is defined as follows:
\begin{equation}
\widehat d_{\mathcal G}^{(r)}
=\operatorname{Re}
\{
\mathbf g_{v,r}^{H}\mathbf d_v^{(r)}
\}
+
(
\nabla_{\widetilde{\Phi}_c}\mathcal G^{(r)}
)^T
\operatorname{Proj}_{\mathcal T_{\mathcal B}
(\widetilde{\Phi}_c^{(r)})}
(
\mathbf d_{\Phi}^{(r)}
),
\label{product_directional_derivative}
\end{equation}
where
$\operatorname{Proj}_{\mathcal T_{\mathcal B}
(\widetilde{\Phi}_c^{(r)})}(\cdot)$ denotes the projection onto the tangent cone of
$\mathcal B$ at $\widetilde{\Phi}_c^{(r)}$.
Therefore, the analog beamformer and all RA rotation variables are
jointly updated using the common step size $\alpha_r$.

The iterations terminate when
$
\sqrt{
\|
\mathbf g_{v,r}
\|_2^2
+
\|
\mathbf g_{\Phi,r}
\|_2^2
}
\leq
\epsilon_{\mathrm p}$
,
or the maximum number of iterations is reached.
After convergence, the analog beamformer and the communication-stage
RA configuration are recovered as follows:
\begin{equation}
\mathbf w^\star
=
\frac{\mathbf v^\star}{\sqrt{N_t}},
\qquad
\Phi_c^\star
=
\mathbf D_c\widetilde{\Phi}_c^\star.
\label{PSJO_recovered_solution}
\end{equation}

The proposed PSJO algorithm for solving the communication stage
problem~\eqref{P3} is summarized in
Algorithm~\ref{alg:overall_algorithm}. The Armijo line search yields
a nondecreasing sequence of objective values. Since $\mathcal G$ is
continuous on the compact feasible set $\mathcal P$, this sequence is
bounded above and thus converges.

\subsection{Overall Algorithm}

\begin{algorithm}[!t]
\caption{PSJO Algorithm for Communication-Stage RA and Beamforming Design}
\label{alg:overall_algorithm}
\begin{algorithmic}[1]
\State \textbf{Input:}
$\{\mathbf h_k\}_{k=1}^{K}$,
$\{\mathbf h_{e,m}\}_{m=1}^{M}$,
$\beta_c$, $\widehat{\alpha}$, $\tau$,
$c_{\mathrm A}$, $\epsilon_{\mathrm p}$, and $I_{\mathrm p}$.
\State Initialize $\mathbf v^{(0)}\in\mathcal M$ and
$\widetilde{\Phi}_c^{(0)}\in\mathcal B$, and set $r=0$.
\State Compute $a_k$ and $b_m$ according to
\eqref{communication_smoothing_weights}.
\State Compute $\mathbf g_{v,0}$ and $\mathbf g_{\Phi,0}$
according to \eqref{Riemannian_grad_w} and
\eqref{rotation_projected_gradient}.
\Repeat
    \If{$r=0$}
        \State Set
        $\mathbf d^{(0)}\leftarrow\mathbf g^{(0)}$.
    \Else
        \State Compute the vector transports according to
        \eqref{product_vector_transport}.
        \State Compute $\sigma_r$ according to
        \eqref{product_conjugate_parameter}.
        \State Compute $\mathbf d^{(r)}$ according to
        \eqref{joint_conjugate_direction}.
    \EndIf
    \State Determine $q_r$ satisfying
    \eqref{product_Armijo_condition}, and set
    $\alpha_r\leftarrow\tau^{q_r}\widehat{\alpha}$.
    \State Compute the tentative point according to
    \eqref{product_tentative_update}.
    \State Update $\mathbf v^{(r+1)}$ and
    $\widetilde{\Phi}_c^{(r+1)}$ according to
    \eqref{product_feasible_update}.
    \State Set $r\leftarrow r+1$.
    \State Compute $a_k$ and $b_m$ according to
    \eqref{communication_smoothing_weights}.
    \State Compute $\mathbf g_{v,r}$ and $\mathbf g_{\Phi,r}$
    according to \eqref{Riemannian_grad_w} and
    \eqref{rotation_projected_gradient}.
\Until{
    $\sqrt{
    \|\mathbf g_{v,r}\|_2^2+
    \|\mathbf g_{\Phi,r}\|_2^2
    }
    \leq\epsilon_{\mathrm p}$
    or $r\geq I_{\mathrm p}$
}
\State \textbf{Output:}
$\mathbf w^\star=\mathbf v^{(r)}/\sqrt{N_t}$ and
$\Phi_c^\star=\mathbf D_c\widetilde{\Phi}_c^{(r)}$.
\end{algorithmic}
\end{algorithm}

\textit{1) Complexity of the sensing stage:}
The sensing-stage complexity is dominated by the repeated evaluation
of $\mathcal J(\Phi_s)$ during the cyclic coordinate search in
Algorithm~\ref{alg:sensing_coordinate_search}. Each search cycle
examines $L_{\mathrm{arr}}+1$ array-level candidates and
$N_t(L_{\mathrm{ele}}+1)$ element-level candidates. Every candidate
is evaluated over the $M_1+1$ angular samples in $\mathcal T$.
After precomputing the receive-array quantities, computing
$\Psi(\theta_{m_1};\Phi_s)$ at one angular sample requires
$\mathcal O(N_t)$ operations. Therefore, the sensing-stage
complexity is
$
\mathcal O
\left(
I_{\mathrm s}
(M_1+1)N_t
\left[
L_{\mathrm{arr}}+1
+
N_t(L_{\mathrm{ele}}+1)
\right]
\right),
$
where $I_{\mathrm s}$ denotes the number of coordinate-search
cycles.

\textit{2) Complexity of the communication stage:}
The computational cost of the PSJO algorithm consists of the
product-space update and the Armijo line search. For the
product-space update, directly evaluating the analog-beamforming
gradient in \eqref{grad_w} requires
$\mathcal O((K+M)N_t^2)$ operations. The RA-rotation gradients from
\eqref{grad_array_rotation} to \eqref{grad_element_rotation}, together
with the conjugate direction and the retraction-projection update,
require $\mathcal O((K+M)N_t)$ operations. Therefore, the complexity
of the product-space update is dominated by
$\mathcal O((K+M)N_t^2)$.
The Armijo line search evaluates $\mathcal G$ once for each 
step, with a cost of $\mathcal O((K+M)N_t)$. Let
$\overline{Q}_{\mathrm A}$ denote the average number of steps
per iteration. The corresponding line-search complexity is
$\mathcal O(\overline{Q}_{\mathrm A}(K+M)N_t)$.
Combining these two parts, the overall complexity of the PSJO
algorithm is
$
\mathcal O
(
I_{\mathrm p}
[
(K+M)N_t^2
+
\overline{Q}_{\mathrm A}(K+M)N_t
]
),
$
where $I_{\mathrm p}$ denotes the number of PSJO iterations.

\section{Numerical Results}
In this section, numerical results are provided to evaluate the performance of the proposed sensing-aided secure communication design.
In the simulations, the carrier frequency is set to $28$~GHz. The
BS is equipped with $N_t=8$ transmit antennas and $N_r=16$ receive
antennas. Unless otherwise specified, the communication and sensing
transmit powers are set to $P_t=20$~dBm and $P_s=35$~dBm,
respectively. The noise powers in the communication and sensing
stages are set to $\sigma_c^2=\sigma_s^2=\sigma_e^2=-90$~dBm. The RCS of the
eavesdropper is set to $\alpha=7$~dBsm, and the antenna directivity
factor is set to $p=1$.
The BS serves $K=3$ legitimate users. For each Monte Carlo
realization, the user distances and azimuth angles are independently
and uniformly drawn from $[30,50]$~m and
$(-90^\circ,90^\circ)$, respectively. The eavesdropper is located at
$(30~\mathrm{m},30^\circ)$. The initial angular search region is set
to $\Theta_1=[20^\circ,40^\circ]$. The number of sensing beams is $L=16$.
The angular traversal set over $\Theta_1$ contains $M_1+1=401$
uniformly spaced points, while the CRB-based uncertainty region
$\Theta_2$ is uniformly discretized into $M=51$ points.
The maximum array-level and element-level rotation angles are both
set to $15^\circ$. The candidate sets for the array-level and
element-level rotations contain $L_{\mathrm{arr}}+1=101$ and
$L_{\mathrm{ele}}+1=101$ uniformly spaced points, respectively.
The smoothing parameter is set to $\beta_c=100$. For the Armijo line
search, the initial trial step, contraction factor, and
sufficient-ascent parameter are set to
$\widehat{\alpha}=1$, $\tau=0.5$, and
$c_{\mathrm A}=10^{-4}$, respectively. The convergence tolerances
are set to $\epsilon_{\mathrm c}=\epsilon_{\mathrm p}=10^{-6}$, and
the maximum number of PSJO iterations is $I_{\mathrm p}=200$. 
All results are averaged over $200$ independent Monte Carlo realizations.
The empirical MSE of the MLE is
evaluated using $20000$ independent noise realizations.

For comparison, the following benchmark schemes are considered:
1) $\textbf{FPA:}$ Fixed-position array with analog beamforming.
2) $\textbf{GRA:}$ Globally rotatable array with analog beamforming, where only array-level rotation is enabled.
3) $\textbf{ERA:}$ Element-rotatable array with analog beamforming, where only element-wise rotation is enabled.
4) $\textbf{TRA-PE:}$ Two-level rotatable array with analog beamforming using
    only the point estimate $\hat{\theta}_e$, without considering the CRB-based angular
    uncertainty region.
5) $\textbf{TRA-MRT:}$ Two-level rotatable array with MRT-inspired analog beamforming, optimized to maximize the minimum legitimate-user channel gain without using eavesdropper information.
For sensing performance, 6) $\textbf{Lower bound:}$
An analytical CRB lower bound is obtained by dropping the non-positive
correlation term in $\Psi(\theta;\Phi_s)$ and replacing the
rotation-dependent directional-gain and angular-sensitivity terms with
their respective maximum values.
\subsection{Simulation Results}
\begin{figure}[!t]
    \centering
    \captionsetup{font=small}
    \begin{subfigure}[t]{0.48\linewidth}
        \centering
        \includegraphics[width=\linewidth]{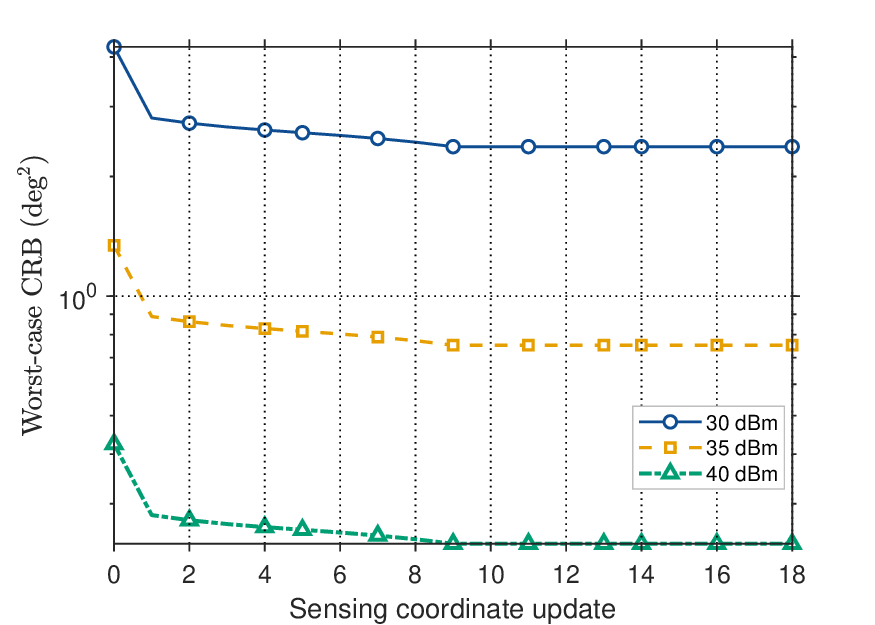}
        \caption{Algorithm 1}
        \label{fig:sensing_convergence}
    \end{subfigure}
    \hfill
    \begin{subfigure}[t]{0.48\linewidth}
        \centering
        \includegraphics[width=\linewidth]{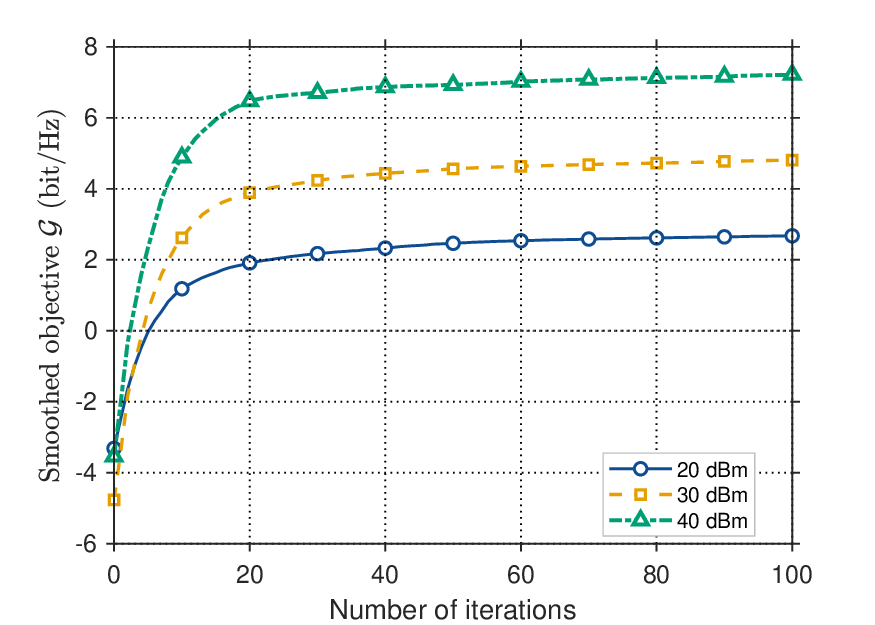}
        \caption{Algorithm 2}
        \label{fig:communication_convergence}
    \end{subfigure}
    \caption{\justifying Convergence behavior of the proposed algorithms.}
    \label{fig:convergence}
\end{figure}

Fig.~\ref{fig:convergence} illustrates the convergence behavior of
the proposed algorithms under different transmit-power levels. As
shown in Fig.~\ref{fig:convergence}(a), the worst-case CRB decreases
monotonically with the coordinate updates, with most of the reduction
achieved within the first few updates. All curves become nearly
stable after approximately ten updates. Moreover, increasing $P_s$
leads to a lower converged CRB owing to the improved sensing SNR. In
Fig.~\ref{fig:convergence}(b), the smoothed objective $\mathcal G$
increases monotonically for all considered $P_t$. The main
improvement occurs within the first $20$ iterations, after which the
curves gradually approach stationary values. A higher $P_t$ also
yields a larger converged objective value. These results are
consistent with the established monotonicity of the two algorithms
and confirm their convergence within a moderate number of
iterations.

\begin{figure}[!t]
    \centering
    \includegraphics[width=\linewidth]{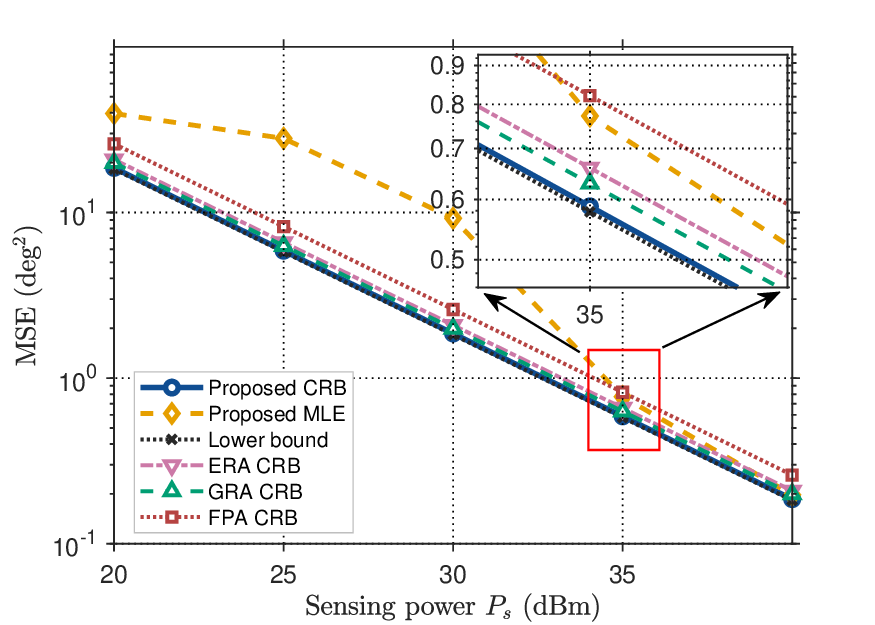}
    \captionsetup{font={small}}
    \caption{\justifying MSE of eavesdropper AoD estimation versus $P_s$.}
    \label{fig:mse}
\end{figure}

Fig.~\ref{fig:mse} evaluates the eavesdropper AoD estimation
performance in terms of the CRB and the empirical MSE of the MLE
versus the sensing power $P_s$. As $P_s$ increases, the CRBs of all
schemes decrease owing to the improved sensing SNR. The proposed
scheme achieves the lowest CRB among the considered array
architectures and remains close to the lower bound.
Compared with the GRA, ERA, and FPA schemes, jointly optimizing the
array-level and element-level rotations further reduces the CRB. At
high sensing powers, e.g., $P_s\geq 35~\mathrm{dBm}$, the MSE of the
MLE approaches the proposed CRB, showing that the CRB is tight in
this regime. At lower sensing powers, the pronounced gap between the
MLE and the CRB is consistent with the threshold effect of nonlinear
parameter estimation.

\begin{figure}[!t]
    \centering
    \includegraphics[width=\linewidth]{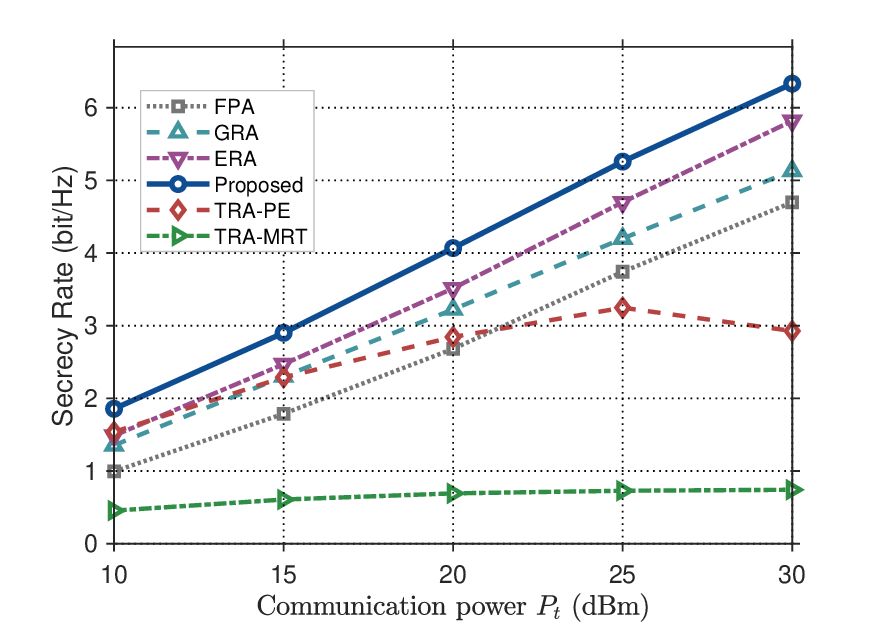}
    \captionsetup{font={small}}
    \caption{\justifying Secrecy rates of different schemes versus $P_t$.}
    \label{fig:Pt}
\end{figure}

Fig.~\ref{fig:Pt} shows the secrecy rates of different schemes versus
$P_t$. The secrecy rates of the proposed, ERA, GRA, and FPA schemes
increase steadily with $P_t$, with the proposed scheme performing
best over the entire power range. ERA consistently outperforms GRA,
while joint array-level and element-level rotations provide a further
gain. TRA-MRT changes only slightly with $P_t$ because it does not
use eavesdropper information to suppress leakage. The secrecy rate of
TRA-PE increases up to $P_t=25~\mathrm{dBm}$ but decreases at
$30~\mathrm{dBm}$. Since TRA-PE treats $\hat{\theta}_e$ as the true
eavesdropper direction, its leakage suppression becomes less reliable
as the effect of the estimation error grows with $P_t$.

\begin{figure}[!t]
    \centering
    \includegraphics[width=\linewidth]{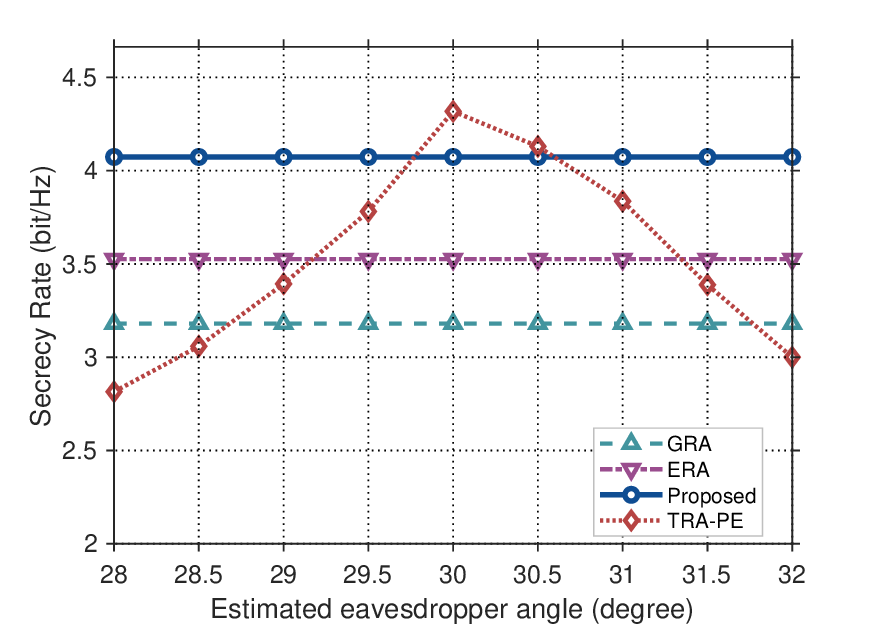}
    \captionsetup{font={small}}
    \caption{\justifying Performance comparison between the proposed and
the Non-robust scheme.}
    \label{fig:beampattern}
\end{figure}

Fig.~\ref{fig:beampattern} compares the robustness of different
schemes against eavesdropper-angle estimation errors. The true
eavesdropper angle is fixed at $\theta_e=30^\circ$, while
$\hat{\theta}_e$ varies over $\Theta_2=[28^\circ,32^\circ]$. The
proposed scheme maintains an almost constant secrecy rate because it
accounts for the entire uncertainty interval. The ERA and GRA curves
are also stable but remain below the proposed scheme because of their
restricted rotation capabilities. TRA-PE attains its highest secrecy
rate when $\hat{\theta}_e=\theta_e$ and slightly outperforms the
proposed scheme at this point. Its secrecy rate drops rapidly as the
estimation error increases. The point-estimate design therefore
offers a higher rate when the estimate is exact, whereas the
robust schemes avoid large losses over the uncertainty interval.

\begin{figure}[!t]
    \centering
    \includegraphics[width=\linewidth]{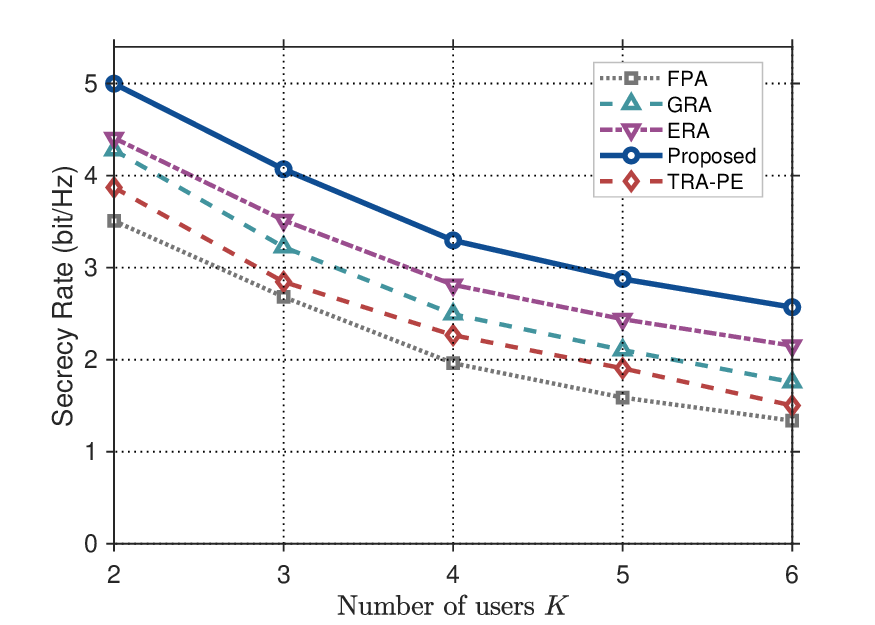}
    \captionsetup{font={small}}
    \caption{\justifying Secrecy rates of different schemes versus $K$.}
    \label{fig:K}
\end{figure}

Fig.~\ref{fig:K} shows the secrecy rates of different schemes versus
$K$. The secrecy rates decrease as the number of legitimate users
increases because the single CM analog beamformer must balance more
user channels, making the worst-user bottleneck more restrictive.
The proposed scheme maintains the highest secrecy rate over the
entire range. The ordering of the remaining schemes is unchanged,
with ERA followed by GRA, TRA-PE, and FPA. The performance gap between
the proposed and single-level RA schemes persists as $K$ increases.

\begin{figure}[!t]
    \centering
    \includegraphics[width=\linewidth]{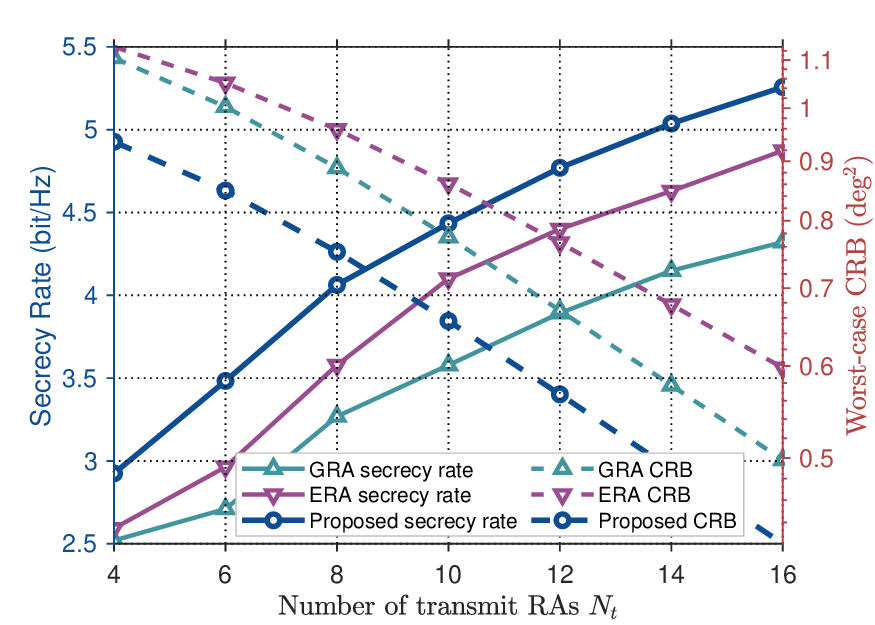}
    \captionsetup{font={small}}
    \caption{\justifying  Secrecy rates and CRB of the proposed scheme versus number of RAs.}
    \label{fig:numberRA}
\end{figure}
Fig.~\ref{fig:numberRA} illustrates the secrecy rates and the
corresponding worst-case CRBs versus the number of transmit RAs
$N_t$. As $N_t$ increases, the secrecy rates rise steadily, whereas
the worst-case CRBs decline accordingly. Additional RAs provide
higher array gain for eavesdropper AoD estimation and more spatial
DoFs for beamforming, allowing stronger legitimate reception and
more effective leakage suppression. These results demonstrate that increasing the number of RAs is 
beneficial for enhancing both sensing accuracy and secrecy performance.

\begin{figure}[!t]
    \centering
    \includegraphics[width=\linewidth]{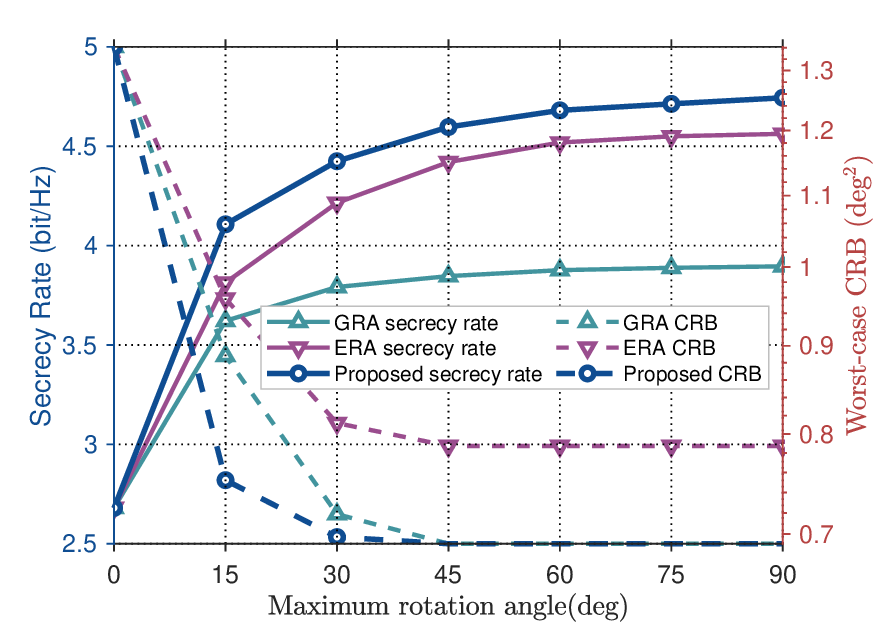}
    \captionsetup{font={small}}
    \caption{\justifying  Secrecy rates and CRB of the proposed scheme versus maximum rotation angle.}
    \label{fig:anglelimit}
\end{figure}

Fig.~\ref{fig:anglelimit} shows the secrecy rate and worst-case CRB
versus the maximum allowable rotation angle, where
$\phi_{\max}=\varphi_{\max}$. When this angle is zero, all three RA
architectures reduce to the same fixed array and achieve identical
performance. Increasing the feasible rotation range raises the
secrecy rate and reduces the CRB by allowing more favorable
orientations for eavesdropper sensing and secure beamforming. Most
of the improvement occurs below approximately $45^\circ$. Beyond
this point, the optimized rotations already lie within the feasible
range, and further enlargement yields little change. For nonzero
rotation limits, GRA achieves a lower CRB than ERA, whereas ERA
provides a higher secrecy rate than GRA. Their joint use gives the
proposed scheme the highest secrecy rate and the lowest CRB level.
\section{Conclusion}
In this paper, we studied sensing-aided secure multicast transmission
in a two-level RA-enabled ISAC system. In the sensing stage, the
array-level and element-level rotations were optimized to minimize the
worst-case AoD CRB. The MLE and the resulting CRB determined the
center and width of the eavesdropper angular uncertainty region,
respectively. A cyclic coordinate search algorithm was developed for
the sensing-stage design. In the communication stage, the CM analog
beamformer and RA configuration were jointly optimized to maximize
the worst-case secrecy rate over this region. The resulting problem
was solved using smooth approximation and the proposed PSJO
framework. Numerical results verified the convergence of both
algorithms. The proposed design approached the analytical CRB lower bound
and achieved higher secrecy rates than the considered
benchmarks. It also maintained stable performance under
eavesdropper-angle estimation errors.
\appendices

\section{Proof of Lemma 1}

For ease of exposition, we first vectorize the received sensing signal matrix \(\mathbf Y_s\) as follows:
\begin{equation}
\bar {\mathbf y}
=
\operatorname{vec}(\mathbf Y_s)
=
\beta_s \operatorname{vec}{\mathbf q(\theta_e)}+\operatorname{vec}(\mathbf N_s),
\end{equation}
where
$\mathbf q(\theta)
=\mathbf a_r(\theta)\mathbf b^H(\theta)\mathbf X_s.$
For fixed parameter set \(\bm \chi \triangleq \{\Re ({\beta}_s),\Im ({\beta}_s),\theta\}\), the likelihood function associated with the vector of variables $\bm \chi$ is
\begin{equation}
\mathcal L(\bar{\mathbf y};\bm \chi)
=
\frac{1}{(\pi\sigma_s^2)^{LN_r}}
\exp\!\left(
-\frac{1}{\sigma_s^2}
\left\|
\bar {\mathbf y}-\beta_s\operatorname{vec}(\mathbf q(\theta))
\right\|^2
\right).
\end{equation}
To maximize the likelihood function, it is equivalent to solve
\begin{equation}\label{likelihood}
({\theta}^*_e,{\beta}^*_s)
=
\arg\min_{\theta\in\Theta_1,\beta_s}
\left\|
\bar {\mathbf y}-\beta_s\operatorname{vec}(\mathbf q(\theta))
\right\|^2.
\end{equation}
Given any \(\theta\), the optimal problem related to \(\beta_s\) can be regarded as a linear least squares problem, and its optimum is given by
\begin{equation} \label{beta*}
{\beta}^*_s
=
\frac{(\operatorname{vec}(\mathbf q(\theta)))^H \bar {\mathbf y}}{\|\operatorname{vec}(\mathbf q(\theta))\|^2}.
\end{equation}
Substituting \eqref{beta*} into \eqref{likelihood}, we obtain
\begin{equation}\label{append_A_result}
\left\|
\bar {\mathbf y}-{\beta}^*_s(\theta)\operatorname{vec}(\mathbf q(\theta))
\right\|^2
=
\|\bar{\mathbf y}\|^2
-
\frac{
\left|
\mathbf a_r^H(\theta)\mathbf Y_s\mathbf X_s^H\mathbf b(\theta)
\right|^2
}{
\mathbf b^H(\theta)\mathbf X_s\mathbf X_s^H\mathbf b(\theta)
}.
\end{equation}
According to the result in \eqref{append_A_result}, the minimization problem in \eqref{likelihood} is equivalent to the maximization problem in \eqref{MLE}.
Thus, we can estimate the angle of the eavesdropper by solving the problem \eqref{MLE}.

\section{Proof of Lemma 2}
Since our focus is on evaluating the performance of the angle estimation, we define $\bm \beta =[\operatorname{Re}(\beta_s),\operatorname{Im}(\beta_s)]^T$ for simplicity.
Then, we denote $\bm J \in \mathbb{R}^{3\times3}$ as the FIM with regard to the estimated parameters $\bm\chi \triangleq [\theta_e,\bm\beta^T]^T$,
which is given by
\begin{equation}
\mathbf J
=
\begin{bmatrix}
J_{\theta_e\theta_e} & \mathbf J_{\theta_e\bm\beta} \\
\mathbf J_{\theta_e\bm\beta}^T & \mathbf J_{\bm\beta\bm\beta}
\end{bmatrix}.
\end{equation}
Each entry of $\bm J$ can be calculated using
\begin{equation} \label{J}
\bm J_{i,j}
=
2\operatorname{Re}\!\left\{
\frac{\partial (\beta_s\mathbf q(\theta_e))^H}{\partial \bm \chi_i}
\mathbf R_n^{-1}
\frac{\partial (\beta_s\mathbf q(\theta_e))}{\partial \bm\chi_j}
\right\},\forall i,j,
\end{equation}
where the noise covariance \(\mathbf R_n=\sigma_s^2\mathbf I_{N_rL}\). 
According to the definition of FIM, the CRB for angle estimation is given by
\begin{equation} \label{CRB_append}
\mathrm{CRB}(\theta_e)
=
[\mathbf J^{-1}]_{1,1}
=
[
J_{\theta_e\theta_e}
-
\mathbf J_{\theta_e\bm\beta}
\mathbf J_{\bm\beta\bm\beta}^{-1}
\mathbf J_{\theta_e\bm\beta}^T
]^{-1},
\end{equation}
where each entry of $\bm J$ can be calculated as follows referring to \eqref{J}.
Since
\[
\frac{\partial (\beta_s\mathbf q)}{\partial \theta_e}
=
\beta_s\dot{\mathbf q},
\qquad
\frac{\partial (\beta_s\mathbf q)}{\partial \operatorname{Re}\{\beta_s\}}
=
\mathbf q,
\qquad
\frac{\partial (\beta_s\mathbf q)}{\partial \operatorname{Im}\{\beta_s\}}
=
j\mathbf q,
\]

we obtain
\begin{align} \label{JJ}
J_{\theta_e\theta_e} &= \frac{2|\beta_s|^2}{\sigma_s^2}\dot{\mathbf q}^H\dot{\mathbf q},\notag  \\
\mathbf J_{\theta_e\bm\beta}
&=
\frac{2}{\sigma_s^2}
\begin{bmatrix}
\operatorname{Re}\{\beta_s^*\dot{\mathbf q}^H\mathbf q\}-\operatorname{Im}\{\beta_s^*\dot{\mathbf q}^H\mathbf q\}
\end{bmatrix},\notag\\
\mathbf J_{\bm\beta\bm\beta}
&=
\frac{2}{\sigma_s^2}
(
\mathbf q^H\mathbf q
)\mathbf I_2.
\end{align}
Substituting \eqref{JJ} into \eqref{CRB_append}, we have
\begin{equation}
\mathrm{CRB}(\theta_e)
=
\frac{\sigma_s^2}{
2|\beta_s|^2
(
\dot{\mathbf q}^H\dot{\mathbf q}
-
\frac{|\dot{\mathbf q}^H\mathbf q|^2}{\mathbf q^H\mathbf q}
)
},
\end{equation}
where
$
\dot{\mathbf q}(\theta_e)
=
\operatorname{vec}(
\dot{\mathbf a}_r(\theta_e)\mathbf b^H(\theta_e)\mathbf X_s
+
\mathbf a_r(\theta_e)\dot{\mathbf b}^H(\theta_e)\mathbf X_s
)
$, $
\dot{\mathbf a}_r(\theta_e)
\triangleq
\frac{\partial \mathbf a_r(\theta_e)}{\partial \theta_e}$
and $
\dot{\mathbf b}(\theta_e)
\triangleq
\frac{\partial \mathbf b(\theta_e)}{\partial \theta_e}.
$

Based on the expression in \eqref{Rx} and \(\|\mathbf a_r(\theta_e)\|^2=1\), the CRB for the angle $\theta_e$ can be further simplified as \eqref{CRB}.
\section{Proof of Lemma 3}
Let \(c_{\min}=\min_{k=1,\ldots,K}c_k\). Since \(c_k\ge c_{\min}\)
for all \(k\) and \(\beta>0\), we have
\[
\exp(-\beta c_k)\le \exp(-\beta c_{\min}), \quad \forall k .
\]
Moreover, there exists at least one index \(k^\star\) such that
\(c_{k^\star}=c_{\min}\). Therefore,
\[
\exp(-\beta c_{\min})
\le
\sum_{k=1}^{K}\exp(-\beta c_k)
\le
K\exp(-\beta c_{\min}).
\]
Taking the logarithm and multiplying by \(-1/\beta\) yield
\[
c_{\min}-\frac{1}{\beta}\log K
\le
-\frac{1}{\beta}
\log\sum_{k=1}^{K}\exp(-\beta c_k)
\le
c_{\min}.
\]
Substituting \(c_{\min}=\min_{k=1,\ldots,K}c_k\), we obtain
\[
\min_{k=1,\ldots,K} c_k-\frac{1}{\beta}\log K
\le
-\frac{1}{\beta}
\log\sum_{k=1}^{K}\exp(-\beta c_k)
\le
\min_{k=1,\ldots,K} c_k.
\]
The approximation gap is upper-bounded by \(\log K/\beta\), which
vanishes as \(\beta\to\infty\). This completes the proof.

\end{document}